\RequirePackage[l2tabu, orthodox]{nag}
\documentclass[11pt]{article}





\setcounter{totalnumber}{50}
\setcounter{topnumber}{50}
\setcounter{bottomnumber}{50}


\usepackage{amsmath, mathrsfs, amssymb}

\usepackage[utf8]{inputenc}

\usepackage[a4paper]{geometry}
\usepackage[colorlinks=false, pdfborder={0 0 0}]{hyperref}
\usepackage{graphicx}
\usepackage[usenames,xcdraw,dvipsnames,svgnames,table]{xcolor}
\usepackage{adjustbox}
\usepackage{pifont}
\usepackage{cleveref}
\usepackage{microtype}
\usepackage{siunitx}
\usepackage{booktabs}
\usepackage{enumitem}
\setlist[description]{leftmargin=\parindent,labelindent=\parindent}
\usepackage{subfig}
\usepackage[linewidth=1pt]{mdframed}
\usepackage{multicol, array}
\usepackage{algorithm}
\usepackage{algorithmic}
\usepackage{authblk}

\title{It wasn't me! Plausible Deniability in Web Search}
\author[1]{P\'{O}L MAC AONGHUSA}
\author[2]{DOUGLAS J. LEITH}
\affil[1]{IBM Research \\ Trinity College Dublin}
\affil[2]{Trinity College Dublin}
\date{}                     
\setcounter{Maxaffil}{0}

\newcommand*\rot{\rotatebox{90}}

\usepackage{array}
\newcolumntype{L}[1]{>{\raggedright\let\newline\\\arraybackslash\hspace{0pt}}m{#1}}
\newcolumntype{C}[1]{>{\centering\let\newline\\\arraybackslash\hspace{0pt}}m{#1}}
\newcolumntype{R}[1]{>{\raggedleft\let\newline\\\arraybackslash\hspace{0pt}}m{#1}}


\newcommand{\SET}[1]{\ensuremath{\mathcal{#1}}}               
\newcommand{\EV}[1]{\ensuremath{\mathscr{#1}}}                
\newcommand{\VEC}[1]{\ensuremath{\mathbf{\bar{#1}}}}                


\def\EPS{\textit{\ensuremath{\epsilon}-Indistinguishability }}

\def\PRI{\textbf{PRI }}
\def\FPRI{\ensuremath{\mathbb{M}_{k}}}
\def\FPRIBAR{\ensuremath{\bar{\mathbb{M}}_{k}}}
\def\FPRIMAX{\ensuremath{\mathbb{M}}_{k}^{*}}
\def\FPRIMIN{\ensuremath{\mathbb{M}}_{k,*}}
\def\FPRIHAT{\ensuremath{\hat{\mathbb{M}}_{k}}}
\def\FPRIHATMIN{\ensuremath{\hat{\mathbb{M}}_{k,*}}}
\def\PRIEPS{\textbf{PRI+} }

\def\P{\text{\rm Prob}}




\newcommand*{\GREEN}[1]{{\color[rgb]{0,1,0.5}\textbf{#1}}}
\newcommand*{\RED}[1]{{\color[rgb]{1,0,0}\textbf{#1}}}
\newcommand*{\BLUE}[1]{{\color[rgb]{0,0,1}\textbf{#1}}}

\newtheorem{theorem}{Theorem}[section]

\newtheorem{proposition}[theorem]{Proposition}

\newenvironment{proof}[1][Proof]{\begin{trivlist}
\item[\hskip \labelsep {\bfseries #1}]}{\end{trivlist}}
\newenvironment{definition}[1][Definition]{\begin{trivlist}
\item[\hskip \labelsep {\bfseries #1}]}{\end{trivlist}}

\newenvironment{assumption}[1][Assumption]{\begin{trivlist}
\item[\hskip \labelsep {\bfseries #1}]}{\end{trivlist}}

\newcommand{\QED}{\nobreak \ifvmode \relax \else
      \ifdim\lastskip<1.5em \hskip-\lastskip
      \hskip1.5em plus0em minus0.5em \fi \nobreak
      \vrule height0.75em width0.5em depth0.25em\fi}


\begin{document}
\maketitle

\begin{abstract}
Our ability to control the flow of sensitive personal information to online systems is key to trust in personal privacy on the internet. We ask how to detect, assess and defend user privacy in the face of search engine personalisation? We develop practical and scalable tools allowing a user to detect, assess and defend against threats to plausible deniability. We show that threats to plausible deniability of interest are readily detectable for all topics tested in an extensive testing program. We show this remains the case when attempting to disrupt search engine learning through noise query injection and click obfuscation are used. We use our model we design a defence technique exploiting uninteresting, proxy topics and show that it provides amore effective defence of plausible deniability in our experiments.
\end{abstract}

\section{Introduction}
\label{sec:intro}

Individual privacy is not just about monitoring the flow of personal information, but also having sufficient agency or control to direct the flow of personal information appropriately, \cite{Nissenbaum:2009:PCT:1822585,boyd2011talk}. When presented with discreditable content, depending on social context, we may wish to deny our interest. We ask when can a user can \emph{plausibly deny} interest in topics on the \emph{balance of probabilities} or with \emph{reasonable doubt}? 

In this paper we extend the work begun in \cite{mac2015don} by improving both the accuracy and robustness of detection techniques. We also show how the same approach can be extended to assess the level of threat in a plausible deniability model. We use our results to design and assess counter-measures to defend plausible deniability against threats.

Our methods are chosen to be straightforward to implement using readily available technologies. Our results are verified in a realistic setting through extensive experiments using Google Search as a source of data. By applying readily implementable methods, we are able to correctly detect evidence of sensitive topic learning of potentially sensitive topics, such as health, finance and sexual orientation, 98\% of the time with noise from false detection rates of below 1\% on average.

In the case of the ability of a user to plausibly deny interest in a topic, we derive guarantees on the best-possible level of plausible deniability a user can expect during web search. We design a range of potential defences of plausible deniability and through an extensive test regime we find that a defence employing uninteresting ``proxy topics'' is effective in protecting plausible deniability in the case of $100\%$ of sensitive topics tested. By contrast, experimental measurements indicate that, by observing as few as 3-5 revealing queries, a search engine can infer a user is interested in a sensitive topic on the balance of probabilities in $100\%$ of topics tested when no effective defence is provided. We show that defence strategies based on random query injection of random noise queries and misleading click patterns may provide some protection for individual, isolated queries, but that search engines are able to learn quickly so seem to offer little or no improvement when considering plausible deniability over the longer term.

The contributions in this paper are:
\begin{itemize}
	\item Technical refinements to our detection technology, called \PRI in \cite{mac2015don}, improving detection accuracy, robustness to changes in training and test data, and simplicity of application.
	\item Guarantees on the applicability of our privacy approach as an aid to design of privacy protection mechanisms. 	
	\item A comprehensive experimental campaign using readily accessible sources to assess the effectiveness of privacy enhancement strategies for individual privacy protection.
\end{itemize}

Our results suggest that search engine capability is continuously evolving so that we can reasonably expect search engines to respond to privacy defences with more sophisticated learning strategies.  Our results also point towards the fact that the text in search queries plays a key role in search engine learning. While perhaps obvious, this observation reinforces the user's need to be circumspect about the queries that they ask if they want to avoid search engine learning of their interests.

\subsection{Related Work}
\label{sec:related:work}
   
The importance of control over appropriate flow of information is discussed extensively in legal and social science fields. Individual control over personal information flow is discussed in a critique of the \emph{nothing to hide} defence for widespread surveillance in \cite{Solove:2007}. Individual privacy and its social consequences are are discussed in \cite{Nissenbaum:2009:PCT:1822585,boyd2011talk}, where agency or control over appropriate disclosure is identified as a concern.

The potential consequences of online profiling and personalisation in censorship and discrimination have received growing attention in the research literature. Personalisation as a form of censorship --  termed a filter bubble in \cite{Pariser:2011:FBI:2029079} -- is explored in \cite{Hannak:2013:MPW:2488388.2488435}. In a filter bubble, a user cannot access subsets of information because the recommender system algorithm has decided it is irrelevant for that user. In \cite{Hannak:2013:MPW:2488388.2488435} a filter bubble effect was detected in the case of Google Web Search in a test with 200 users. 

Discrimination associated with personalisation is investigated in \cite{Sweeney:2013:DOA:2460276.2460278}, where an extensive review of adverts from Google and \url{Reuters.com} showed a strong correlation between adverts suggestive of an arrest record and an individual's ethnicity. Searches containing first names considered black-identifying were on average $25\%$ more likely to receive adverts indicative of an arrest record than searches including white-identifying first names. In \cite{Guha:2010:CMO:1879141.1879152}, the authors considered online advertising targeted {exclusively} to gay men.  Several authors have examined privacy in the context of recommender systems, broadly addressing the question of data privacy with respect to the user data once collected by the recommender. For example in \cite{rama01pr} the authors identify threats from data linking or combination by identifying similar patterns of preference or behaviour conjunction with other data sources to uncover identities and reveal personal details. In their concluding remarks, the authors state that ``the ideal deterrents are better awareness of the issues and more openness in how systems operate in the marketplace. In particular, individual sites should clearly state the policies and methodologies they employ with recommender systems''.

Help for users to better understand the level and impact of personalisation taking place, and to exert some control control over their online privacy has resulted in several technologies. Examples of unsubstantiated and misleading claims by providers of technology to enhance individual privacy are common, \cite{zdnet:charlatans,zdnet:fake}.   Concerns about objective evaluation of the claims by providers of such technologies have attracted the attention of Government, where the need for \emph{``Awareness and education of the users \ldots ''} is identified in \cite{enisa:guide} as a key step to building trust and acceptance of privacy technologies for individuals. Accountability, and enforcement of accountability, for privacy policy is also attracting attention. Regulatory requirements for data handling in industries such as Healthcare (HIPPA) and Finance (GLBA) are well established. The position with respect to handling of data collected by online recommender systems is less clear. In \cite{datta2014privacy}, the author reviews computational approaches to specification and enforcement of privacy policies at large scale.

The research literature on personalisation of web search is extensive, see \cite{decade:web:mining} for an historical survey of results in web mining for personalisation.  In addition to personalisation of adverts based on search queries and pages visited extensive use is already made of data such as location, IP address and browser agent.  Recommender systems also continue to evolve more sophisticated methods of content selection.  For example, semantic targeting techniques, where the overall theme of a web-page is used to select contextually related adverts, are used by companies such as Google Knowledge Graph, \cite{google:graph}, and iSense, \cite{isense}, while Zemanta, \cite{zemanta}, provides a browser add-on that suggests semantically relevant articles, images, links and keywords to content creators.   As a result even so-called `private' browsing mode may not prevent personalisation; for example in \cite{Aggarwal:2010:APB:1929820.1929828} the authors investigate how a range of popular browser extensions and plugins undermine the security of private browsing.

Several approaches exist for obfuscating user interactions with search engines with the aim of distrupting online profiling and personalisation. However this work is at an early stage.  GooPIR, \cite{Domingo-Ferrer:2009h,SaNchez:2013:KSC:2383079.2383150}, attempts to disguise a user's ``true''  queries by adding masking keywords directly into a true query before submitting to a recommender system. Results are then filtered to extract items that are relevant to the user's original true query.  PWS, \cite{balsa2012ob}, and TrackMeNot, \cite{howe2009trackmenot,peddinti2011limitations}, inject distinct noise queries into the stream of true user queries during a user query session, seeking to achieve an acceptable level of anonymity while not overly upsetting overall utility. Browser add-ons to aid user awareness, such as Mozilla Lightbeam, \cite{lightbeam}, and PrivacyBadger, \cite{badger}, help a user understand where their data is shared with third parties through the sites they visit. XRay, \cite{DBLP:journals/corr/LecuyerDLPPSCG14}, reports high accuracy in identifying which sources of user data such as email or web search history might have {triggered} particular results from online services such as adverts.  Search engine algorithm evolution regarded as a continuous ``arms-race'', is evidenced in the case of Google, for example, by major algorithm changes such as \emph{Caffeine} and \emph{Search+ Your World} have included additional sources of background knowledge from Social Media, improved filtering of content such as \emph{Panda} to counter spam and content manipulation, most recently semantic search capability has been added through \emph{Knowledge Graph} and \emph{HummingBird}, \cite{search_timeline_1}, \cite{search_history}, \cite{search_history_2}.

Evaluation of the effectiveness of privacy defence in the wild was performed by \cite{peddinti2010privacy} in the case of \emph{TrackMeNot} where the authors discovered that demonstrate that by using only a short-term history of search queries it is possible to break the privacy guarantees of TrackMeNot using readily available machine-learning classifiers. The importance of background information in user profiling is explored in \cite{Petit2016}. Here a similarity metric measuring distance between \emph{known} background information about a user, given by query history, and subsequent queries is shown to identify 45.3\% of TrackMeNot and 51.6\% of GooPIR queries. Self-regulation has also proven problematic, in \cite{balebako2012measuring}, six different privacy tools, intended to limit advertising due to behavioural profiling, are assessed. The tools assessed implement a variety of tactics including cookie blocking, site blacklisting and Do-Not-Track (DNT) headers. DNT headers were found to be ineffective in tests at protecting against adverts based on user profiling. 

The importance of user click data for content ranking and placement has been extensively studied, see for example \cite{grotov2015comparative} for a recent review.  Research on privacy protection of user click actions has proposed targeted advertising systems where users can send perturbed click feedback through an intermediate agent or broker \cite{wang2015privacy}, seeking to ensure acceptable user privacy while protecting the revenue models of the back-end recommender systems which are based on accurate click tracking. In \cite{degeling2016your} the authors show that an approach called \emph{anti-profiles} whereby uninteresting ``dummy'' topics are constructed based on user interests which can be used to generate noise queries disrupting search engine learning. The authors show that selecting noise queries based on uninteresting topics provides better privacy protection than obfuscation by injecting randomly selected noise queries.

We model a search engine as a black-box by making minimal assumptions about its internal workings. The technique of using predefined \emph{probe queries}, injected at intervals into a stream of true user queries as fixed sampling points, was used in \cite{mac2015don}. The content provided in response to probe queries can be tested for signs of recommender system learning. The technique of using predefined probe queries  is borrowed from black-box testing. Modelling an adversary as a black-box, where internal details of recommender systems algorithms and settings are unknown to users, is mentioned in several sources \cite{datta2014privacy} and \cite{Hannak:2013:MPW:2488388.2488435}.  

\section{General Setup}
\label{sec:gen:setup}
\subsection{Overview}
\label{sec:notation} 

We wish to detect threats to privacy due to personalisation, assess the degree of threat posed to plausible deniability when personalisation is detected, and, provide some ability for the user to defend against unwanted personalisation. We structure our approach as follows:

\begin{description}
	\item [Detection Model] The detection model considered in this paper is one of topic category \emph{distinguishability} where a search engine  seeks to determine a user's likely interest in commercially valuable categories -- rather than individually identify a single user.

	\item [Assessment Model] We assess the level of severity of detected threats by testing the ability of a user to gauge the degree of threat to plausible deniability, that is, ability to assert or deny interest in topics deemed sensitive.

	\item [Defence Models] We seek practical defences against threats to plausible deniability by providing a user with agency to control  personalisation when it is assessed as unwanted. 
 
\end{description}

Our initial assumption is that a for-profit commercial recommender system, such as a search engine, selects variable page content to maximise its expected revenue.  In particular, when a search engine infers that a particular advertising category is likely to be of interest to a user, and it is more likely to generate click through and sales, it is obliged to use this information when selecting which adverts to display. This suggests that, by examining \emph{advert content} recommended by the search engine, it is possible to detect evidence of sensitive topic profiling by the search engine.

To test for learning a predefined \emph{probe query} is injected into a stream of ``true'' queries during a query session. In this way, any differences detected in advert content in response to probe queries can be compared to identify evidence of learning.  

\subsection{Notation}
Let $\{c_1, \ldots, c_N\}$ denote a set of sensitive categories of interest to an individual user, \emph{e.g.} bankruptcy, cancer, addiction, \emph{etc}. Gather all other uninteresting categories into a catch-all category denoted $c_{0}$. The set $\SET{C} = \{c_0, c_1, \ldots, c_N\}$ is \emph{complete} in the sense that all user topic interests can be represented as subsets of $\SET{C}$ with the usual set operations.

We assume that a user interacts with a search engine by issuing a query, receiving a web page in response and then clicking on one or more items in the response. A single such interaction, labeled with index $j$, consists of a \emph{query, response page, item-click} triple, denoted 
	$\Omega_j = \left(q_j, p_j, l_j \right)$.
A user session of length $k>0$ steps consists of a {sequence} of $k$ individual steps, and is denoted  $\left\{\Omega_k\right\}_{k\ge1}$.   
The sequence of interactions $\left\{\Omega_k\right\}_{k\ge1}$ is jointly observed by  the user and the search engine -- and perhaps several other third-party observers. 

Let $\EV{E}_k$ denote the prior evidence -- also referred to as background knowledge -- available to an observer at the start of step $k$.  We assume the search engine does not change its background knowledge during a user session other than through $\Omega_k$.   That is, $\EV{E}_1$ denotes the prior evidence available to the search engine before the user session begins, \emph{e.g.} the user's login profile, historical queries, weblogs  \emph{etc}, and for $k=1,2,\cdots$ we have:
\begin{align}
\EV{E}_{k+1} &= \{\EV{E}_{k}, \Omega_{k} \}_{k\geq1} \notag
\end{align}%

At each step $k$ of a query session, a search engine can construct a prior probability for evidence of user interest in categories in $\SET{C}$, represented by the interest vector $\VEC{x} \in \{0,1\}^{N}$, namely: 
\begin{align}
\P(\VEC{X} = \VEC{x} \, \vert  \,  \EV{E}_k)
\label{eqn:actions:001}
\end{align}

\noindent Where $\VEC{X}$ is a random variable with sample space $\{0,1\}^{N}$ and a value of $1$ in element $i$ of \VEC{X} indicates evidence is detected of user interest in topic $c_i$, and $\EV{E}_k$ denotes the prior evidence available to an observer at the start of step $k$ in a query session. We have assumed that user interests do not vary over the lifetime of a single query session so that $X$ is independent of the step $k$. 

The individual interest vector for topic $i$ is denoted $\VEC{c}_i$, a vector with a single $1$ in the $i^{th}$ position and $0$ in all other positions. The prior probability of detecting evidence that a user is interested \emph{only} in topic $i$ at the beginning of the session is therefore:
\begin{align*}
	\P(\VEC{X} = \VEC{c}_i \vert \, \EV{E}_1) := p_i
\end{align*}

On observing events $\Omega_k$ from the user at step $k$, a search engine can construct a posterior probability distribution for $X$ conditioned on observing $\Omega_k$. Denote the posterior probability of detecting evidence that a user is interested only in topic $i$ at query step $k$, conditioned on observing $\Omega_k$ and background knowledge $\EV{E}_k$, by:
\begin{align*}
	\P(\VEC{X} = \VEC{c}_i \vert \, \Omega_k, \, \EV{E}_{k} ) := \pi_{i,k}
\end{align*}
Since $\mathcal{C}$ contains all topics: 
\begin{align}
	\sum_{i=0}^{N} p_i &=1, \qquad \sum_{i=0}^{N} \pi_{i,k} =1 \label{eqn:pri:2}
\end{align}
We adopt the following notation for the maximum and minimum prior and posterior probabilities:
\begin{align*}	
p^{*} &:= \max_{0\leq i \leq N} \{p_i\}, \quad p_{*} := \min_{0\leq i \leq N} \{p_i\},\quad \pi^{*}_{k} := \max_{0\leq i \leq N} \{\pi_{i,k}\}, \quad \pi_{k,*} := \min_{0\leq i \leq N} \{\pi_{i,k}\}
\end{align*}

\subsection{Detection Model}
\label{sec:priv:model}

The following \emph{indistinguishability} definition of \emph{privacy risk} measures the change in belief by a search engine due to inference from observed user events relative to a prior probability  conditioned on the background data available at the start of the query session:

\begin{definition}[\EPS]
\label{def:eps:001}
A user is said to be $\epsilon$-Indistinguishable, given an interest vector $\VEC{x} \in \{0,1\}^{N}$, with respect to an observation of user actions $\Omega_k$, and for $\epsilon > 0$ if:

\begin{align}
e^{-\epsilon} &\leq \FPRI (\VEC{x}, \Omega_k) \leq e^{\epsilon}
\label{def:eps:indist:001}
\end{align}

\noindent where

\begin{align}
\FPRI (\VEC{x}, \Omega_k)  &= \frac{\P(\VEC{X} = \VEC{x} \, \vert  \,  \Omega_k, \EV{E}_{k} )}{\P(\VEC{X} = \VEC{x} \, \vert  \,  \EV{E}_{1})}
\label{eqn:m:u}
\end{align}
\noindent is called the \emph{\EPS Score} of the interest vector $\VEC{x}$ given observation $\Omega_k$ and $\EV{E}_{k}$ represents the background knowledge available at step $k$.
\label{def:eps:indist1}
\end{definition}
In other words, for \EPS to hold t step $k$ of a query session, for a given combination of categories in $\SET{C}$, the conditional posterior distribution should be approximately equal to the prior distribution at the beginning of the query session for that combination of categories. It is possible that the denominator of \eqref{eqn:m:u} is zero while the numerator is non-zero, and vice versa, so that $0 \leq \FPRI (\VEC{x}, \Omega_k) < \infty $.

The corresponding \EPS score for topic $c_i$ at step $k$ is therefore:
\begin{align}
	\FPRI(\VEC{c}_i, \Omega_k) &=\frac{\pi_{i,k}}{p_i}
	\label{eqn:pri:3}
\end{align}
\noindent where it is assumed there is some user interest in each topic $c_i \in \SET{C}$ so that $p_i > 0$ for all $i=0,1, \ldots N$ and \eqref{eqn:pri:3} is well-defined for each topic $c_i$.

Let:
\begin{align*}
	\FPRIBAR &= \frac{1}{N+1}\sum_{i=0}^{N}\FPRI(\VEC{c}_i, \Omega_k), \quad \FPRIMAX = \max_{0\leq i \leq N} \{\FPRI(\VEC{c}_i, \Omega_k)\}, \quad\FPRIMIN = \min_{0\leq i \leq N} \{\FPRI(\VEC{c}_i, \Omega_k)\}
\end{align*}
denote the average, maximum and minimum values of the \PRI scores, $\FPRI$, at step $k$ in the query session. The observed quantity, $\Omega_k$ is omitted in the expressions for $\FPRIBAR$, $\FPRIMAX$ and $\FPRIMIN$ for notational simplicity since its value is fixed once the  query step $k$ is set. 

The following relationships between these quantities allow us to relate prior and posterior probabilities through the quantities $\FPRIBAR$, $\FPRIMAX$ and $\FPRIMIN$:
\begin{proposition}
\label{prop:0}
\begin{align}
	\frac{\pi_{k,*}}{p^{*}} &\underset{(a)}{\leq} \FPRIMIN \underset{(b)}{\leq} \FPRIBAR \underset{(c)}{\leq} \FPRIMAX \underset{(d)}{\leq} \frac{\pi^{*}_{k}}{p_{*}} \label{eqn:big:inequality}
\end{align}	
if \EPS also holds then:

\begin{align}
p_{*}e^{-\epsilon} &\leq \pi_{k,*} \leq p^{*}\FPRIMIN, \quad p_{*}\FPRIMAX \leq \pi^{*}_{k} \leq p^{*}e^{\epsilon}
\label{eqn:little:inequality} 	
\end{align}
in the special case, when all $N+1$ categories are a priori equally likely:
	\begin{align}
		\pi_{k,*} &= \frac{\FPRIMIN}{N+1}, \quad
		\pi^{*}_{k} = \frac{\FPRIMAX}{N+1} \label{eqn:uniform:prior}
	\end{align}
	
\end{proposition}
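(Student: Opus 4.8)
The plan is to dispatch the three groups of assertions separately, noting that every one of them reduces to an elementary inequality for maxima, minima and arithmetic means of the finite family of ratios $\FPRI(\VEC{c}_i,\Omega_k)=\pi_{i,k}/p_i$, $0\le i\le N$.

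For the chain \eqref{eqn:big:inequality}, inequalities (b) and (c) are immediate: they are just the statements that the least element of a finite list is at most its average and the average is at most its greatest element. For (a) I would observe that for every index $i$ we have $\pi_{i,k}\ge\pi_{k,*}$ and $p_i\le p^{*}$, hence $\FPRI(\VEC{c}_i,\Omega_k)=\pi_{i,k}/p_i\ge\pi_{k,*}/p^{*}$; taking the minimum over $i$ gives $\FPRIMIN\ge\pi_{k,*}/p^{*}$. Symmetrically, for (d), $\pi_{i,k}\le\pi^{*}_{k}$ and $p_i\ge p_{*}$ give $\FPRI(\VEC{c}_i,\Omega_k)\le\pi^{*}_{k}/p_{*}$ for every $i$, and taking the maximum over $i$ yields $\FPRIMAX\le\pi^{*}_{k}/p_{*}$.

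For \eqref{eqn:little:inequality}, the two outer bounds $\pi_{k,*}\le p^{*}\FPRIMIN$ and $p_{*}\FPRIMAX\le\pi^{*}_{k}$ are just (a) and (d) cleared of denominators, so I would simply cite the previous step rather than re-derive them. The remaining two bounds are where the \EPS hypothesis enters: applying \eqref{def:eps:indist:001} to each individual interest vector $\VEC{c}_i$ gives $e^{-\epsilon}\le\pi_{i,k}/p_i\le e^{\epsilon}$, that is $p_ie^{-\epsilon}\le\pi_{i,k}\le p_ie^{\epsilon}$ for every $i$; bounding $p_i\ge p_{*}$ on the lower side and $p_i\le p^{*}$ on the upper side then yields $\pi_{i,k}\ge p_{*}e^{-\epsilon}$ and $\pi_{i,k}\le p^{*}e^{\epsilon}$ for every $i$, and taking the minimum in the first inequality and the maximum in the second produces $\pi_{k,*}\ge p_{*}e^{-\epsilon}$ and $\pi^{*}_{k}\le p^{*}e^{\epsilon}$.

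Finally, for the equally-likely case \eqref{eqn:uniform:prior}, $\sum_{i=0}^{N}p_i=1$ together with the $p_i$ all being equal forces $p_i=1/(N+1)$ for every $i$, so $\FPRI(\VEC{c}_i,\Omega_k)=(N+1)\pi_{i,k}$; since multiplying a finite list by a positive constant scales its maximum and minimum by that same constant, $\FPRIMIN=(N+1)\pi_{k,*}$ and $\FPRIMAX=(N+1)\pi^{*}_{k}$, which is the claim. There is no genuine obstacle here: the whole proposition is routine, and the only point calling for a little care is to be explicit in \eqref{eqn:little:inequality} about which two bounds are mere rearrangements of the chain \eqref{eqn:big:inequality} and which two actually invoke \EPS, so that no hypothesis is silently used where it is not needed.
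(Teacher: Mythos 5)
Your proposal is correct and follows essentially the same route as the paper's own proof: (b), (c) from the min/mean/max relations, (a), (d) from the pointwise bounds $\pi_{k,*}/p^{*}\le \FPRI(\VEC{c}_i,\Omega_k)\le \pi^{*}_{k}/p_{*}$, the \EPS hypothesis applied componentwise for the outer bounds of \eqref{eqn:little:inequality}, and $p_{*}=p^{*}=1/(N+1)$ in the equiprobable case. Your explicit remark on which bounds in \eqref{eqn:little:inequality} use \EPS and which are mere rearrangements of \eqref{eqn:big:inequality} is a small clarity improvement over the paper, but the argument is the same.
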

\begin{proof}
Inequalities (b) and (c) follow directly from the definitions of $\FPRIBAR$, $\FPRIMAX$ and $\FPRIMIN$. Inequalities (a) and (d) follow since $\frac{\pi_{k,*}}{p^{*}} \leq \FPRI(\VEC{c}_i, \Omega_k)$ and  $\frac{\pi^{*}_{k}}{p_{*}} \geq \FPRI(\VEC{c}_i, \Omega_k)$
for all $i = 0,1, \ldots N$ and given $\Omega_k$. Consequently $\frac{\pi_{k,*}}{p^{*}} \leq \FPRIMIN$ and $\frac{\pi^{*}_{k}}{p_{*}} \geq \FPRIMAX$.

To show \eqref{eqn:little:inequality}, rearrange inequalities (a) and (d) in \eqref{eqn:big:inequality} and since \eqref{def:eps:indist:001} implies 
\begin{align*}
p_{*}e^{-\epsilon} &\leq p_{i} e^{-\epsilon} \leq \pi_{i} \leq p_{i} e^{\epsilon} \leq p^{*} e^{\epsilon}, \quad \forall i=0, 1, \ldots N \\
&\implies p_{*}e^{-\epsilon} \leq \pi_{k,*} \leq  \pi^{*}_{k} \leq p^{*} e^{\epsilon}
\end{align*}

Now \eqref{eqn:uniform:prior} follows from the definition of $\FPRI$ in \eqref{eqn:pri:3} by noting that $p_{*} = p^{*} = \frac{1}{N+1}$ when $p_i = \frac{1}{N+1}$ so that $\pi_{k,*} = \FPRIMIN/(N+1)$ and $\pi^{*}_{k} = \FPRIMAX/(N+1)$.
\QED 	
\end{proof}%

The total variation of \FPRI is bounded by the prior probabilities $p_i$ as the following result shows.
\begin{proposition}
\label{prop:1} 
\begin{align}
\frac{1}{p^{*}} &\leq \sum_{i=0}^{N} \FPRI(\VEC{c}_i, \Omega_k) \leq \frac{1}{p_{*}}
\label{eqn:bounds:1}	
\end{align}

If all $N+1$ categories are a priori equally likely, so that $p_i = \frac{1}{N+1}$ for $i=0,1,\dots,n$, then:
	\begin{align}
	\FPRIBAR = 1 \label{eqn:step:3:1}
	\end{align}
\end{proposition}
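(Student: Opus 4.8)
The plan is to prove both parts by exploiting the representation $\FPRI(\VEC{c}_i, \Omega_k) = \pi_{i,k}/p_i$ from \eqref{eqn:pri:3} together with the normalisation $\sum_{i=0}^{N}\pi_{i,k} = 1$ from \eqref{eqn:pri:2}. The whole argument is a one-line sandwich, so I do not expect any genuine obstacle; the only care needed is to track that $p_i>0$ for every $i$ (already assumed when \eqref{eqn:pri:3} was introduced), so that each term $\pi_{i,k}/p_i$ is well-defined and finite.

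For the bounds \eqref{eqn:bounds:1}, first I would write
\begin{align*}
\sum_{i=0}^{N}\FPRI(\VEC{c}_i, \Omega_k) = \sum_{i=0}^{N}\frac{\pi_{i,k}}{p_i}.
\end{align*}
Since $p_{*} \le p_i \le p^{*}$ for every $i$, we have $\tfrac{1}{p^{*}} \le \tfrac{1}{p_i} \le \tfrac{1}{p_{*}}$. Multiplying through by $\pi_{i,k} \ge 0$ and summing over $i$ gives
\begin{align*}
\frac{1}{p^{*}}\sum_{i=0}^{N}\pi_{i,k} \;\le\; \sum_{i=0}^{N}\frac{\pi_{i,k}}{p_i} \;\le\; \frac{1}{p_{*}}\sum_{i=0}^{N}\pi_{i,k},
\end{align*}
and then I would substitute $\sum_{i=0}^{N}\pi_{i,k}=1$ to collapse the two outer expressions to $1/p^{*}$ and $1/p_{*}$, which is exactly \eqref{eqn:bounds:1}.

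For the uniform-prior case \eqref{eqn:step:3:1}, I would simply set $p_i = \tfrac{1}{N+1}$ in the definition of $\FPRIBAR$:
\begin{align*}
\FPRIBAR = \frac{1}{N+1}\sum_{i=0}^{N}\frac{\pi_{i,k}}{p_i} = \frac{1}{N+1}\sum_{i=0}^{N}(N+1)\,\pi_{i,k} = \sum_{i=0}^{N}\pi_{i,k} = 1,
\end{align*}
again using \eqref{eqn:pri:2} in the last step. Optionally I would remark that this is consistent with \eqref{eqn:bounds:1}, since in the uniform case $p^{*}=p_{*}=\tfrac{1}{N+1}$ forces $\sum_i \FPRI(\VEC{c}_i,\Omega_k) = N+1$, i.e. average value $1$. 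No step here is delicate; the ``hardest'' point is merely remembering that the inequalities $1/p^\ast \le 1/p_i \le 1/p_\ast$ are being weighted by the probability vector $(\pi_{i,k})$, which is what turns the pointwise bound into the stated global bound.
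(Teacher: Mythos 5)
Your proof is correct and follows essentially the same route as the paper: substitute $\FPRI(\VEC{c}_i,\Omega_k)=\pi_{i,k}/p_i$, sandwich $1/p_i$ between $1/p^{*}$ and $1/p_{*}$ weighted by the probability vector $(\pi_{i,k})$, and use $\sum_i \pi_{i,k}=1$. For the uniform case the paper deduces $\sum_i \FPRI(\VEC{c}_i,\Omega_k)=N+1$ from the collapsed bounds of the first part rather than recomputing $\FPRIBAR$ directly, but this is the same trivial calculation you include as your optional remark.
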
%
\begin{proof}
By direct substitution of \eqref{eqn:pri:3} into \eqref{eqn:bounds:1}, $\sum_{i=0}^{N} \FPRI(\VEC{c}_i, \Omega_k) = \sum_{i=0}^{N} \frac{\pi_{i,k}}{p_i}$. From the definitions of $p^{*}$ and $p_{*}$:
\begin{align}
	\frac{1}{p^{*}}\sum_{i=0}^{N} \pi_{i,k} &\leq \sum_{i=0}^{N} \frac{\pi_{i,k}}{p_i} \leq \frac{1}{p_{*}}\sum_{i=0}^{N} \pi_{i,k} 
	\label{eqn:bounds:step:2}
\end{align}
Next, since $\sum_{i=0}^{N} \pi_{i,k} = 1$ then $\frac{1}{p^{*}} \leq \sum_{i=0}^{N} \frac{\pi_{i,k}}{p_i} \leq \frac{1}{p_{*}}$ and \eqref{eqn:bounds:1} follows.

To prove \eqref{eqn:step:3:1}, since $p_i = \frac{1}{N+1}$, $p_{*} = p^{*} = \frac{1}{N+1}$ then from \eqref{eqn:bounds:1} we have that $\sum_{i=0}^{N} \FPRI(\VEC{c}_i, \Omega_k) = N+1$ and \eqref{eqn:step:3:1} follows directly.  
\QED 
\end{proof}%


\subsection{Assessment Model}
\label{sec:by:design}

We adopt a threat assessment model based on \emph{plausible deniability} to capture a notion of reasonable doubt. That is, with the usual setup of topics $\SET{C} = \{c_0, c_1, \ldots c_N\}$, we require the probability that the user is interested in any one of the topics in $\SET{C}$ to be close to the probability that the user is interested in any other topic in $\SET{C}$ as follows:

\begin{definition}[Mutual Plausible Deniability\label{def:plausible}]
\label{def:plausible:detect:001}
Given a set of topics $\SET{C} = \{c_0, c_1, \ldots c_N\}$, 
a user is said to have \emph{mutual plausible deniability} in topics in \SET{C} with confidence  $\delta\geq 0$, with respect to an observation of user actions $\Omega_k$  at step $k$ in the query session, if:

\begin{align}
\max_{i \neq j}\left\vert \P(\VEC{X} = \VEC{c}_i \, \vert  \,  \Omega_k, \EV{E}_{k} ) - \P(\VEC{X} = \VEC{c}_j \, \vert  \,  \Omega_k, \EV{E}_{k} ) \right\vert &\leq \delta_{k}
\label{eqn:delta:deny}
\end{align}
\label{def:delta:deny}
\noindent where $\delta_k$ is a constant, independent of the choice of topic $c_i \in \SET{C}$.
\end{definition}

Adopting the usual notation of $\pi_{i,k} := \P(\VEC{X} = \VEC{c}_i \, \vert  \,  \Omega_k, \EV{E}_{k} )$, \eqref{eqn:delta:deny} reads:
\begin{align}
	\max_{i\neq j} \left\vert \pi_{i,k} - \pi_{j,k} \right\vert < \delta_{k}	
	\label{eqn:delta:deny:2}
\end{align}

The following result confirms the intuitive interpretation that categories that are mutually plausibly deniable should have approximately equal posterior probability after an observation is made. In other words, $\delta_k$ measures how closely interest in topic category $c_k$ approaches random chance.

\begin{proposition}\label{prop:mpd:1}
Given a set of topics $\SET{C} = \{c_0, c_1, \ldots c_N\}$ satisfying Mutual Plausible Deniability for $\delta_k > 0$, then:
\begin{align}
\left\vert \pi_{i,k} - \frac{1}{N+1} \right\vert < \delta_k, \quad \forall i=0,1,\ldots N	
\label{eqn:delta:deny:3}
\end{align}	
\end{proposition}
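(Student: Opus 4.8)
The plan is to derive \eqref{eqn:delta:deny:3} from the hypothesis \eqref{eqn:delta:deny:2} together with the normalisation $\sum_{i=0}^{N}\pi_{i,k}=1$ from \eqref{eqn:pri:2}. The key observation is that if all pairwise differences $|\pi_{i,k}-\pi_{j,k}|$ are bounded by $\delta_k$, then each individual $\pi_{i,k}$ cannot stray far from the average of the $\pi_{j,k}$'s, and that average is exactly $\frac{1}{N+1}$.

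First I would fix an arbitrary index $i$ and write $\pi_{i,k}-\frac{1}{N+1} = \pi_{i,k} - \frac{1}{N+1}\sum_{j=0}^{N}\pi_{j,k} = \frac{1}{N+1}\sum_{j=0}^{N}\bigl(\pi_{i,k}-\pi_{j,k}\bigr)$, using the normalisation to rewrite the constant $\frac{1}{N+1}$ as an average. Then by the triangle inequality, $\bigl|\pi_{i,k}-\frac{1}{N+1}\bigr| \le \frac{1}{N+1}\sum_{j=0}^{N}\bigl|\pi_{i,k}-\pi_{j,k}\bigr|$. The $j=i$ term contributes $0$, and each of the remaining $N$ terms is at most $\delta_k$ by \eqref{eqn:delta:deny:2}, giving $\bigl|\pi_{i,k}-\frac{1}{N+1}\bigr| \le \frac{N}{N+1}\delta_k < \delta_k$. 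Since $i$ was arbitrary, \eqref{eqn:delta:deny:3} follows for all $i=0,1,\ldots,N$.

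There is no real obstacle here — the argument is a routine averaging-plus-triangle-inequality estimate — but one point deserving a sentence of care is the strictness of the inequality: \eqref{eqn:delta:deny:2} is stated with a strict ``$<$'', and since the bound we obtain is $\frac{N}{N+1}\delta_k$, which is strictly smaller than $\delta_k$ for $\delta_k>0$, the strict conclusion is comfortably justified (indeed the slightly sharper bound $\frac{N}{N+1}\delta_k$ could be recorded instead, but $\delta_k$ suffices for the stated result). I would present the identity rewriting $\frac{1}{N+1}$ as $\frac{1}{N+1}\sum_j \pi_{j,k}$ explicitly, as that is the only mildly non-obvious step, and then let the triangle inequality finish the proof.
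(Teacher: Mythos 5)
Your proof is correct and follows essentially the same route as the paper's: both rewrite the constant via the normalisation $\sum_{j=0}^{N}\pi_{j,k}=1$, decompose $\pi_{i,k}-\frac{1}{N+1}$ into an average of pairwise differences, and finish with the triangle inequality and the deniability bound. The only (minor) difference is that you discard the $j=i$ term to get the sharper bound $\frac{N}{N+1}\delta_k<\delta_k$, which also makes the strictness of the conclusion immune to whether the hypothesis is read with ``$\leq$'' or ``$<$''.
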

\begin{proof}
Taking $0 \leq J \leq N$,
	\begin{align}
		\left\vert (N+1)\pi_{J,k} - 1 \right\vert &= \left\vert (N+1)\pi_{J,k} - \sum_{i=0}^{N}\pi_{i,k} \right\vert  &\text{(Since $\sum_{i=0}^{N} \pi_{i,k} = 1$)} \\
		&= \left\vert \sum_{i=0}^{N} \left( \pi_{J,k} - \pi_{i,k} \right)  \right\vert \\
		&\leq \sum_{i=0}^{N}\left\vert \pi_{J,k} - \pi_{i,k} \right\vert  &\text{(Triangle inequality)} \\
		&< (N+1)\delta_k &\text{(From \eqref{eqn:delta:deny})}
	\end{align}
So that
	\begin{align}
	\left\vert \pi_{J,k} - \frac{1}{N+1} \right\vert < \delta_k	
	\label{eqn:delta:deny:3:step:9}
	\end{align}

Since the choice of $J$ is arbitrary, \eqref{eqn:delta:deny:3:step:9} applies for any $0 \leq J \leq N$ and the result follows. \QED 
\end{proof}

The next result provides a method to bound  mutual plausible deniability where there is equal a priori probability for all topics so that $p_i = 1/(N+1)$ .

\begin{proposition}\label{prop:mpd:2}
	Given a set of topics $\SET{C} = \{c_0, c_1, \ldots c_N\}$ with equal a priori probability for all topics, Mutual Plausible Deniability holds for any $\delta_k$ satisfying,
	\begin{align}
		\delta_k &\geq \frac{\FPRIMAX - \FPRIMIN}{N+1}
		\label{eqn:mpd:2}
	\end{align}
	So that ~\eqref{eqn:mpd:2} gives the \emph{best case} level of mutual plausible deniability possible for $\SET{C}$.
\end{proposition}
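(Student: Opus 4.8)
The plan is to derive the result directly from the uniform-prior identities of Proposition~\ref{prop:0}. Under the hypothesis $p_i = 1/(N+1)$ for every $i$, equation~\eqref{eqn:uniform:prior} already tells us that $\pi_{k,*} = \FPRIMIN/(N+1)$ and $\pi^{*}_{k} = \FPRIMAX/(N+1)$; in other words the smallest and largest posterior probabilities are exactly the rescaled extreme \PRI scores. So the only thing left is to connect the spread of the posteriors to the defining quantity of Mutual Plausible Deniability.

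First I would note that the quantity $\max_{i\neq j}\lvert \pi_{i,k}-\pi_{j,k}\rvert$ appearing in Definition~\ref{def:plausible:detect:001} is, by the definitions of maximum and minimum, attained by pairing an index that achieves $\pi^{*}_{k}$ with one that achieves $\pi_{k,*}$ (these are distinct indices as soon as $N\ge 1$, which is the only nontrivial case). Hence
\[
\max_{i\neq j}\lvert \pi_{i,k}-\pi_{j,k}\rvert \;=\; \pi^{*}_{k}-\pi_{k,*}\;=\;\frac{\FPRIMAX}{N+1}-\frac{\FPRIMIN}{N+1}\;=\;\frac{\FPRIMAX-\FPRIMIN}{N+1}.
\]
Comparing this with \eqref{eqn:delta:deny:2}, any $\delta_k$ with $\delta_k\ge(\FPRIMAX-\FPRIMIN)/(N+1)$ makes the defining inequality hold, so Mutual Plausible Deniability is satisfied for every such $\delta_k$; and because the display above is an equality, no strictly smaller $\delta_k$ can work, which gives the asserted \emph{best case} in~\eqref{eqn:mpd:2}.

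I do not expect a real obstacle here --- the argument is essentially a one-line substitution into Proposition~\ref{prop:0}. The only points deserving a moment's care are: (i) justifying that the pairwise maximum over $i\neq j$ is realized by the $(\max,\min)$ pair rather than some other pair, which is immediate once one recalls that $\lvert a-b\rvert$ over a finite set is maximized at its extreme elements; and (ii) the mild mismatch in the excerpt between the strict bound in~\eqref{eqn:delta:deny:2} and the non-strict bound in~\eqref{eqn:delta:deny} --- I would work with the non-strict form, so that the extremal value $\delta_k=(\FPRIMAX-\FPRIMIN)/(N+1)$ is itself admissible and the phrase ``best case'' is literally correct.
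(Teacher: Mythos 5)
Your proof is correct and takes essentially the same route as the paper's: both reduce $\max_{i\neq j}\lvert\pi_{i,k}-\pi_{j,k}\rvert$ to $\pi^{*}_{k}-\pi_{k,*}$ and then apply the uniform-prior identity \eqref{eqn:uniform:prior} of Proposition~\ref{prop:0} to get $(\FPRIMAX-\FPRIMIN)/(N+1)$. Your added remarks --- that the pairwise maximum is attained at the $(\max,\min)$ pair, and the strict-versus-non-strict inequality issue between \eqref{eqn:delta:deny} and \eqref{eqn:delta:deny:2} --- are minor clarifications the paper leaves implicit.
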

\begin{proof}
	From \ref{eqn:delta:deny:2}, 
	\begin{align}
		\max_{i \neq j}\left\vert \pi_{i,k} - \pi_{j,k} \right\vert &= \left\vert \pi^{*}_{k} - \pi_{k,*} \right\vert < \delta_k \label{prop:mpd:1:step1}
	\end{align}
Since prior probabilities are equal, it follows from \eqref{eqn:uniform:prior} in Proposition~\ref{prop:0} that

\begin{align}
\left\vert \pi^{*}_{k} - \pi_{k,*} \right\vert &= \pi^{*}_{k} - \pi_{k,*} = \frac{\FPRIMAX - \FPRIMIN}{N+1}
\label{prop:mpd:1:step3}
\end{align}

\noindent and the result follows. \QED
\end{proof}

From \eqref{eqn:mpd:2}, the \emph{best possible} mutual plausible deniability over $K>0$ measurements, denoted $\delta^{*}$, is given by:

\begin{align}
	\delta^{*} &= \max_{0<k\leq K}\frac{\FPRIMAX - \FPRIMIN}{N+1}
	\label{eqn:mpd:3}
\end{align}
We will report values of $\delta^{*}$ using \eqref{eqn:mpd:3} in subsequent experimental results.

%
%
%

\subsection{Defence Models}
\label{sec:defence:model}
We turn now to model defences in the face of threat to plausible deniability when all categories are a priori equally likely. Substituting \eqref{eqn:uniform:prior} in \eqref{eqn:mpd:3} implies that: 
\begin{align}
\delta^{*} &= \max_{0<k\leq K}(\pi^{*}_{k} - \pi_{k,*})
\label{eqn:design:01}
\end{align}
\noindent suggesting that it is beneficial for defence to attempt to distribute the posterior probability mass as evenly as possible over all $N+1$ topic categories to minimise \eqref{eqn:design:01}. As both user clicks and queries are sources of information for the search engine we will consider both in designing candidate defence mechanisms.

We consider the following possible defence strategies against threats to plausible deniability: 

\begin{description}
	\item [a) Random Noise] An obvious approach is to select ``noise queries'' randomly from a list of popular queries and inject them into the ``sensitive'' query session. We reason that injecting such randomly selected queries emulates user interest in the ``other'' catchall topic $c_0$ which may be sufficient to reduce $\delta^{*}$. A key challenge is to identify the correct type and amount of ``noise'' queries to introduce in order to stimulate search engine learning away from the ``true'' sensitive topic. We will therefore test the effect of a number of different ratios of noise-to-sensitive query levels. 
	\item [b) Proxy Topic Noise] Inserting sequences of noise queries, with the queries in each sequence related to a selected uninteresting ``proxy topic'', may stimulate search engine learning in favour of the ``proxy topic''.  Such a change would then be detectable as evidence of interest in the ``proxy topic'' using \PRIEPS. To simulate interest in ``proxy topics'', we must be careful to provide sufficient ``proxy topic'' queries to confirm our interest.
	\item [c) Click Strategies] In \cite{mac2015don}, the authors note that varying click patterns is observed to change the absolute volume of adverts appearing on a page. We reason that, since advert space is limited,  perhaps varying click responses to content will simulate possible interest in a wider range of content on result pages.
\end{description}

The difference between a) and b) is that in a) each noise query is chosen at random from a list of popular queries without reference to any single topic, whereas in b) a block of ``noise queries'' all related to a single proxy topic is generated and inserted into the user session.

\section{Implementation}
\label{sec:experiment:setup}

\subsection{Preliminaries}
\label{sec:bayesian:formulation}

We recap material from \cite{mac2015don} about the construction of the \PRI estimator. The first assumption is that the background knowledge at the first step of a query session, $\EV{E}_1$, provides sufficient description of background knowledge for all subsequent steps of that query session, $\EV{E}_k$.

\begin{assumption}[Sufficiently Informative Response]\label{a:1}
Let $\SET{K}\subset\{1,2,\cdots\}$ label the subsequence of steps at which a probe query is issued.   
At each step $k\in\SET{K}$ at which a probe query is issued,
\begin{align}
\frac{\P(\VEC{X} = \VEC{c} | \Omega_k, \EV{E}_{k} )}{\P(\VEC{X} = \VEC{c} |  \EV{E}_{1} )}&=\frac{\P(\VEC{X} = \VEC{c} | \Omega_k,\EV{E}_{1} )}{\P(\VEC{X} = \VEC{c}|  \EV{E}_{1})}
\label{eqn:assumption:informative}
\end{align}
\end{assumption}

\noindent That is, it is not necessary to explicitly use knowledge of the search history during the current session when estimating $\FPRI$ for a topic $c$ as this is already reflected in the search engine response,  $\Omega_k$, and the background knowledge $\EV{E}_1$, at step $k$.  Assumption \ref{a:1} greatly simplifies estimation as it means we do not have to take account of the full search history, but requires that the response to a query reveals search engine learning of interest in sensitive category $c$ which has occurred. Assumption~\ref{a:1} was called the ``Informative Probe'' assumption in \cite{mac2015don}.

Our next assumption is that adverts are selected to reflect search engine belief in user interests. In this way adverts are assumed to be the principal way in which search engine learning is revealed. Given this assumption, conditional dependence on $\Omega_k$ can be replaced with dependence on the adverts appearing on the screen.

\begin{assumption}[Revealing Adverts]\label{a:2} In the search engine response to a query at step $k$ it is the adverts ${a}_k$ on response page $p_k$ which primarily reveal learning of sensitive categories. 
\begin{align}
\P(\VEC{X} = \VEC{c} | \Omega_k,\EV{E}_{1}) = \P(\VEC{X} = \VEC{c} | {a}_k,\EV{E}_{1} ), \ k\in \SET{K}\label{eq:assum0}
\end{align}
\end{assumption}

We estimate background knowledge $\EV{E}_1$ by selecting a training dataset, denoted $\SET{T}$, consisting of (label,advert) pairs; where the label is the category in $\SET{C}$ associated with the corresponding advert. For example, when testing for evidence of a single, sensitive topic, called ``Sensitive'', $\SET{T}$ contains items labelled ``Sensitive' or ``Other'', where ``Other'' is the label for the uninteresting, catch-all topic $c_0$. In this way $\SET{T}$ approximates the prior observation evidence available at the start of the query session so that $\SET{T}$ is an estimator for $\EV{E}_1$.

Text processing of $\SET{T}$ produces a dictionary $\SET{D}$ of \emph{keyword} features.  This processing removes common English language high-frequency words and maps each of the remaining keywords to a stemmed form by removing standard prefixes and suffixes such as ``--ing'' and ``--ed''.  The dictionary $\SET{D}$ represents an estimate of the known universe of keywords according to the background knowledge contained in the training data.

Text appearing in the adverts in a response page is preprocessed in the same way as \SET{T} to produce a sequence of keywords from $\SET{D}$ for each advert; denoted $W=\{w_1,w_2,\cdots,w_{|W|}\}$. Let $n_{\SET{D}}(w | W) := \vert \{i:i \in \{1,\cdots,|W|\},w_i = w\} \vert$, denote 
 the number of times with which an individual keyword $w \in \SET{D}$ occurs in a sequence $W=\{w_1,w_2,\cdots,w_{|W|}\}$. The relative frequency of an individual keyword $w\in\SET{W}$ is therefore,
\begin{align}
\phi_{\SET{D}}(w | W) &= 
\frac{n_{\SET{D}}(w | W)}{\sum_{w\in\SET{D}} n_{\SET{D}}(w | W)}
\label{eqn:phi:define}
\end{align}
\noindent recalling that only keywords $w$ appearing $\SET{D}$ are admissible due to the text preprocessing.

Let $c_i \in  \SET{C}$ be a sensitive topic of interest,  
and let $\SET{T}(c_i)$ denote the subset of $\SET{T}$ where the labels corresponds to $c_i$. Let $T(\SET{C})$ denote the set of adverts labelled for any topic in $\SET{C}$. The \PRI estimator for $\FPRI(\VEC{x}, \Omega_k)$ given adverts $a_k$ appearing on the result page for query number $k$, is\footnote{Note that in \cite{mac2015don} the expression given for $\hat{M}_k (\VEC{c}_i, \Omega_k)$ is incorrect and is corrected here.}:

\begin{align}
\hat{M}_k (\VEC{c}_i, \Omega_k) &= \sum_{w\in{\SET{D}}}\left( \frac{\phi_{\SET{D}}(w | \SET{T}(c_i))}{  \phi_{\SET{D}}(w | \SET{T}(\SET{C})) } \cdot {\phi_{\SET{D}}(w|a_k)} \right) \label{eqn:mk001}
\end{align}
where we concatenate all of the advert text on page $k$ into a single sequence of keywords and $\psi_{\SET{D}}(w|a_k)$ is the relative frequency of $w$ within this sequence.  Similarly, concatenating all of the keywords in the training set $ \SET{T}(c_i)$, respectively $ \SET{T}(\SET{C})$, into a single sequence then $\phi_{\SET{D}}(w | \SET{T}(c_i))$, respectively $\phi_{\SET{D}}(w | \SET{T}(\SET{C}))$, is the relative frequency of $w$ within that sequence.

\subsection{Tuning the \PRI Estimator}
\label{sec:sparsity}

We begin by noting that the quantity $\psi_{\SET{D}}(w|a_k)$ in expression for the \PRI estimator, \eqref{eqn:mk001}, is problematic when the adverts $a_k$ on page $k$ do not contain any of the topic keywords in dictionary $\SET{D}$ i.e. when $a_k=\emptyset$.  Such adverts indicate that there is no detectable evidence of a particular topic and so, to be consistent with the definition of \EPS in Section \ref{def:eps:001}, should result in a \PRI score of one for that topic.  We therefore replace $\phi_{\SET{D}}(w|a_k)$ with 

 \begin{align}
\psi_{0, \SET{D}}(w | a_k) &= 
\begin{cases}
\phi_{\SET{D}}(w | {a_k})
& \mbox{if } a_k\ne \emptyset \\
1 
& \mbox{otherwise}  
\end{cases} 
\end{align}
We further note that, since the training data is based on a limited sample of adverts, for a particular topic we may be unlucky and during the training phase observe no adverts containing an infrequently occurring keyword.   We therefore adopt the following regularisation approach, defining
\begin{align}
n_{\lambda, \SET{D}}(w | W) &= \lambda + (1-\lambda)n_{\SET{D}}(w | W) \\
\phi_{\SET{\lambda,D}}(w | W) 
&= \frac{n_{\lambda,\SET{D}}(w | W)}{\sum_{w\in\SET{D}} n_{\lambda, \SET{D}}(w | W)}
\\
\psi_{\lambda, \SET{D}}(w | a_k) &= 
\begin{cases}
\phi_{\lambda,\SET{D}}(w | {a})
& \mbox{if } a_k\ne \emptyset \\
1 
& \mbox{otherwise}  
\end{cases} 
\label{eqn:phi:define2}
\end{align}
The parameter $0\le \lambda < 1$ enforces a minimum frequency of $1/|\SET{D}|$ on every keyword.  
The expression \eqref{eqn:mk001} is adjusted correspondingly to give a new estimator we call \PRIEPS:

\begin{align}
\FPRIHAT (\VEC{c_i}, \Omega_k) &= \sum_{w\in{\SET{D}}}\left( \frac{ \phi_{\lambda, \SET{D}}(w | \SET{T}(c_i))}{ \phi_{\lambda, \SET{D}}(w | \SET{T}(\SET{C})) } \cdot  \psi_{\lambda, \SET{D}}(w | {a_k}) \right) \label{eqn:mk001:2}
\end{align}
We will use the \PRIEPS estimator, given by \eqref{eqn:mk001:2}, from now on in this paper, unless stated otherwise.

We select the value of regularisation parameter $\lambda$ as follows.   Recall $\SET{T}(c_i)$ denotes the subset of $\SET{T}(\SET{C})$ that is labelled for category $c_i$ and $ n_{\lambda,\SET{D}}(w | \SET{T}(c_i))$ estimates the number of times that keyword $w\in\SET{D}$ occurs when counting over {all} of the advert text in elements of $\SET{T}(c_i)$.   Similarly, $ n_{\lambda,\SET{D}}(w | \SET{T(\SET{C})}) $ estimates the count in the full training set $\SET{T}(\SET{C})$. It follows that
\begin{align}
	\label{eqn:jpdf:prior:1}
	\hat{f}_{\lambda,i}(w) &:= \frac{n_{\lambda,\SET{D}}(w | T(c_i))}{\sum_{w\in\SET{D}} n_{\lambda, \SET{D}}(w | T(\SET{C}))} 
\end{align}
\noindent is an estimator for the joint prior probability $\P(w\in A_1, \, \VEC{X} = \VEC{c}_i \vert  \, \EV{E}_1)$, where $A_1$ is a random variable consisting of the sequence of keywords from $\SET{D}$ appearing in adverts on page 1. That is, $\P(w\in A_1, \, \VEC{X} = \VEC{c}_i \vert  \, \EV{E}_1)$ is the joint probability that keyword $w$ is displayed on the first page of a query session and the user is interested in category $c_i$ given prior evidence $\EV{E}_1$.  The prior probability $p_i:= \P(\VEC{X} =  \VEC{c}_i \vert  \, \EV{E}_1)=\sum_{w\in\SET{D}} \P(w\in A_1, \, \VEC{X} = \VEC{c}_i \vert  \, \EV{E}_1)$ can be estimated by
\begin{align}
	\label{eqn:prior:jpdf:estimator}	
	\hat{p}_{\lambda,i} := \sum_{w\in\SET{D}} f_{\lambda,i}(w) 
\end{align}
The value of the parameter $\lambda$ is chosen so the prior probability distribution is approximately equiprobable by minimising the square error loss function, i.e. 
\begin{align}
\label{eqn:prior:estimator}
\lambda \in \arg \min _{\lambda\ge 0} L(\lambda):= \sum_{i=0}^{N}\left(\frac{1}{N+1} - \hat{p}_{\lambda,i}\right)^{2}
\end{align}



\begin{figure}[tph]
    \centering
    	\caption{\PRIEPS values, plotted by probe step. \textbf{**are the axes ranges here not a little too large -- makes it hard to see the detail ?}} 
    	\label{fig:base:random:pri}
        \includegraphics[width=\columnwidth, height=0.8\columnwidth]{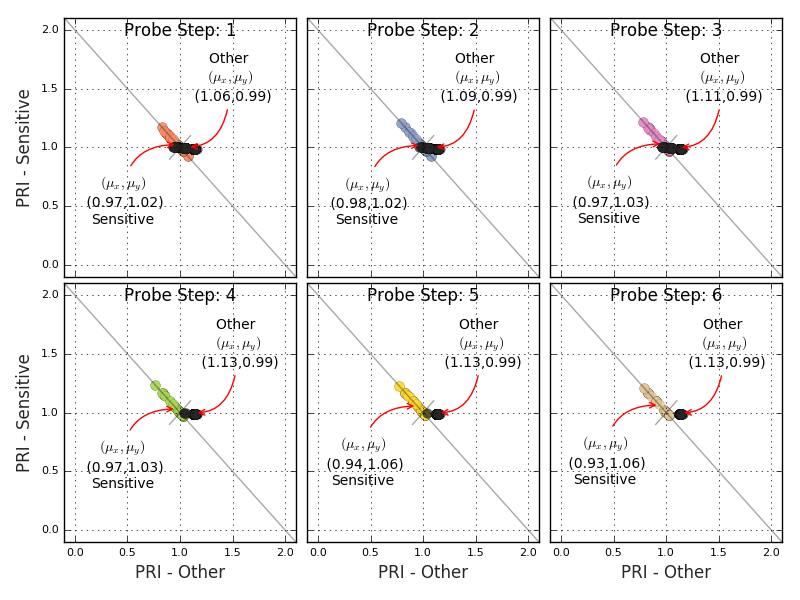}    
\end{figure}%

We illustrate the impact of this regularisation in Figure~\ref{fig:base:random:pri} using a sample of data from our experiments.   The output from \PRIEPS, is an estimate $\FPRIHAT (\VEC{c_i}, \Omega_k)$ for each topic category $i=0,\ldots N$.  We gather these estimates into an $N+1$ dimensional vector $\VEC{P}$ which we call the \PRIEPS \emph{score}, so that the $i^{th}$ component $P_i$ of the \PRIEPS score is $\FPRIHAT (\VEC{c_i}, \Omega_k)$.   In these experiments there is a single sensitive category $c_1$, thus $\SET{C}=\{c_0,c_1\}$ and $\VEC{P}$ is a vector with two elements.   Probe queries are injected at intervals into user sessions, the user sessions themselves consisting either of a sequence of sensitive queries associated with $c_1$ or a sequence of non-sensitive queries associated with $c_0$.   In Figure~\ref{fig:base:random:pri} measured values of $\VEC{P}$ are plotted for the responses to these probe queries, with values from sensitive sessions labeled ``Sensitive' and shown in lighter shades and those from non-sensitive sessions labeled ``Other'' and shown in darker shade.   Separate subplots are shown for each of the first 6 probe query response pages.  The points indicated by the arrows in each subplot show the average of the measured values for the sensitive and non-sensitive user sessions.     

Proposition \ref{prop:1} tell us that that the average of the elements of $\VEC{P}$ is $1$ when topics are a priori equiprobable, i.e. $\VEC{P}$ lies on the line $x+y=2$.    For reference, the line $x+y=2$ is marked on Figure~\ref{fig:base:random:pri}, as well as the mid-point value $(1.0,1.0)$.  It can be seen that the measured data is clustered around this line, as expected due to our choice of regularisation parameter $\lambda$.   As the probe step increases, the $\VEC{P}$ values are increasingly concentrated to the right of $(1.0,1.0)$ for the non-sensitive user sessions and to the left of $(1.0,1.0)$ for the sensitive user sessions.  By the sixth probe the values for the sensitive and non-sensitive sessions have separated into distinct clusters and, as discussed in more detail shortly, we can expect that classification based on the $\VEC{P}$ values will be highly accurate.


\subsection{Detection with \PRIEPS}
\label{sec:classify:prieps}

For detection we normalise the \PRIEPS score $\VEC{P}$ so that each component has mean zero and unit variance.   Namely, for each topic $c_i$, $i=0,1,\dots,N$ we evaluate vector $\VEC{P}$ for each page labelled with $c_i$ in the training set and calculate the empirical mean $\mu_{ij}$ and variance $\sigma_{ij}$ of component $j$ of $\VEC{P}$, $j=1,2,\cdots,N$.   Given a measured score $\VEC{P}$ we calculate the components of the corresponding normalised score $\VEC{Z}$ as
\begin{align}
	Z_i &=  \sum_{j=0}^{N} \left( \frac{P_{j} - \mu_{ij}}{\sigma_{ij} } \right)^{2}, \, i=0, \ldots N
	\label{eqn:z:score:1}
\end{align}
That is, $Z_i$ is the square of the distance of $\VEC{P}$ from the mean vector $\VEC{\mu}_i$ for topic $c_i$ scaled by the components of the corresponding variance vector $\VEC{\sigma}_i$.   

From now on we use this normalised \PRIEPS score, unless otherwise noted, and simply refer to it as the \PRIEPS score.  In our experiments, we declare evidence for profiling of topic $c_I$ has been detected when
\begin{align}
\label{eqn:min:pri:1}
	I \in \underset{0 \leq i \leq N}{ \arg \min} \{Z_0, Z_1, \ldots Z_N \}
\end{align}
That is, we insert probe queries into a user session and use \eqref{eqn:min:pri:1} to decide which topic is detected for each individual probe query in that session.  When deciding which topic is detected, the default tie-breaking action is to default to the non-sensitive ``Other'' category $c_0$.  This happens in practice when, for example, no adverts appear on the result page to be tested. Consequently there is a slight tendency towards classification as ``Other'' so that True Positive ``Sensitive'' topic detections during experiments should be interpreted as a \emph{minimum detection rate}.
 
%

\section{Experimental Results}
\label{sec:experiment:results}

\subsection{Preliminaries}
\label{sec:topical:queries}
We summarise the elements of the data collection setup from \cite{mac2015don} which we use for this present paper to facilitate comparison of results. In all, eleven sensitive user interest topic categories are selected for the purposes of testing. We take the same sensitive topic categories as \cite{mac2015don} to allow comparison of results with previous findings.  Topic categories are described in Table~\ref{tbl:scripts}.  

Of the eleven ``Sensitive' topics, (i) ten are sensitive categories associated with subjects generally identified as causes of discrimination (medical condition, sexual orientation \emph{etc}) or sensitive personal conditions (gambling addiction, financial problems \emph{etc}), (ii) a further sensitive  topic is  related to London as a specific destination location, providing an obviously interesting yet potentially sensitive topic that a recommender system might track, (iii) the last topic is a catch-all category labeled ``Other''.

\begin{table*}[h]
\rowcolors{2}{gray!25}{white}
\setlength{\tabcolsep}{0.15cm} 
\def\arraystretch{1.6}  
\centering
\caption{Categories and associated keyword terms.\label{tbl:scripts}}
\begin{scriptsize} 
\begin{tabular}{@{}m{0.25\columnwidth}m{0.75\columnwidth}@{}}
\cmidrule[1pt]{1-2}

    \textbf{Category}        & \textbf{Keywords }        \\ 
\cmidrule[1pt]{1-2}       
    \textbf{anorexia}          &  nerves eating disorder body image binge diet weight lose fat  \\ 

	\textbf{bankrupt}      &  bankrupt insolvent bad credit poor credit clear your debts insolvency payday insolvent any purpose quick cash benefits low income   \\             

	\textbf{diabetes} &  diabetes mellitus hyperglycaemia blood sugar insulin resistance  \\                                                                      

	\textbf{disabled}         &  disabled special needs accessibility wheelchair  \\                                                                                    

	\textbf{divorce}           &  divorce separation family law   \\                                                                                                        
	
	\textbf{gambling addiction}      &  uncontrollable  addiction compulsive dependency problem support counselling advice therapist therapy help treatment therapeutic recovery anonymous  \\

	\textbf{gay -- homosexuality}   &  gay queer lesbian homosexual bisexual transgender LGBT dyke queen homo    \\                                                                      

	\textbf{location -- London} &  london england uk     \\                                                                                                                  

	\textbf{payday loan}           &  default unsecured debt consolidate advice payday cheap   \\                                                                        

	\textbf{prostate cancer}       &  prostate cancer PSA male urethra urination   \\                                                                                     

	\textbf{unemployed}       &  job seeker recruit search position cv work employment    \\                                                                                                                                                                         

	\textbf{``Other''}        &  Select the top-$50$ queries on Soovle.com as examples of uninformative queries, excluding terms appearing in ``Sensitive' topics.  \\

\end{tabular}                
\end{scriptsize}
\end{table*}

Queries were generated for each of the topics in Table~\ref{tbl:scripts} using the same method as \cite{mac2015don}. For each category apart from the ``Other'' category, a keyword list is created by extracting associated terms from curated sources including Wikipedia (common terms co-occurring on the category page) and Open Directory Project (pages and sub-topics associated with a category).  The extracted keywords are detailed in Table~\ref{tbl:scripts}.   Candidate search queries are then generated for each category by drawing groups of one or more keywords uniformly at random with replacement from the keyword lists.  These candidate queries are manually augmented with common words (and, of \emph{etc}) to yield queries resembling the English language.  In this way a keyword such as fat, for example, might be transformed into a query ``why am i so fat''. Non-sensical or overly robotic queries are removed by manual inspection.

Queries for the ``Other'' topic, which captures topics that are not considered sensitive by the user, are generated from the top query list on the Soovle website \cite{soovle}. Soovle is a meta-search engine that aggregates daily popular query lists from Amazon, Ask, Google, Yahoo and YouTube. It was chosen as a convenient third party source of generic queries. Queries containing terms appearing in the sensitive topic queries are removed from this list and the queries selected for each user session are then drawn uniformly at random.

To construct sequences of queries for use in test sessions, we select a \emph{probe query}, providing a predefined sampling point for data collection. Probe query selection was discussed in detail in \cite{mac2015don}. The probe query ``help and advice'' is used in our experiments for non-medical topics and the ``symptoms and causes'' probe query for medical topics.  The predefined probe query is repeated multiple times with $3-5$ topic specific queries between each instance of the probe query.  In this way ``scripts'' are obtained for each of the $12$ topics in Table~\ref{tbl:scripts}. A query script typically consists of $15-25$ queries including the predefined  probe queries. A test session consists of a single iteration of a single script run from beginning to end.

In our experiments, when implementing the ``Proxy Topic'' defence model, we choose three uninteresting, proxy topics likely to attract adverts, namely tickets for music concerts, searching for bargain vacations and buying a new car. 

\begin{table}[bh]
\centering
\caption{Example query script. The command \texttt{!wait n} instructs the Python script to wait $N$ seconds}
\label{tbl:verbatim:location}
\begin{scriptsize}
\begin{multicols}{3}
\begin{verbatim}
! keywords: addiction dependency 
! probe: help and advice
help and advice
! wait 7
bookies near me
! wait 19
paddy power near me 
! wait 13
bookmaker near me
! wait 4
help and advice
! wait 8
gambling problem
! wait 20
am i a gambling addict 
! wait 15
gambling addiction
! wait 17
help and advice
! wait 2
gambling addiction stories
! wait 11
talk to about gambling 
! wait 13
help and advice
! wait 1
my family does not know i gamble
! wait 8
how to tell my family i gamble
! wait 9
help to stop gambling
! wait 5
stop gambling now
! wait 20
help and advice\end{verbatim}  
\end{multicols}
\end{scriptsize}
\end{table}%

Data was collected using two Linux virtual machines located in a University domain supporting approximately 9,000 users. 
Custom scripts were written to automate query execution and response collection.  These scripts were automated using Python $v3.5.1$, BeautifulSoup4 $v4.4.1$  for HTML processing and phantomjs $v2.1.1$ for browser automation.  
Analysis of results was performed on a $2.3$ GHz Intel Core i7 MacBook Pro using Python $v3.5.1$ to automate data collection and processing. The Python Scipy toolkit $v0.17$ \cite{scikit-learn} was used for text preprocessing.  Numeric processing was performed using the NumPy $v1.10.4$ numerical processing tookit \cite{Idris:2012:NC:2464698}. Plots were created using Matplotlib $v1.5.1$.
 
Measurements of \PRIEPS scores are made for the first $5$ probe queries in a test session during experiments to provide a consistent sample for analysis. A query script selected uniformly at random from one of the topics in Table~\ref{tbl:scripts} was executed each hour over a period of six weeks both as an anonymous user and by logging on as registered Google users. To reduce the appearance of robotic interaction, the script automation program inserts a random pause of 1 to 10 seconds between queries, see Table~\ref{tbl:verbatim:location} for an example. After remaining $5-10$ seconds on a clicked link page, the browser ``back'' button is invoked  to navigate back to the search result page.

As in \cite{mac2015don}, a number of precautions were taken to minimise interactions between runs of each script -- cleaning cookies, history and cache before and after scripts, terminating the session and logging the user out, and waiting for a minimum of twenty minutes between runs to ensure connections  are reset or timed out.  All scripts were run for $3$ registered users and $1$ anonymous user on the Google recommender system, yielding a data set consisting of $21,861$ probe queries in total representing all of the test topics described in Table~\ref{tbl:scripts}. 

Test data was gathered into test data sets based on sensitive topics and types of noise injected and with each test data set consisting of approximately $1,000$ probe queries. A separate hold-back was created for a common training data set of approximately $1,000$ queries. All queries in a test session were automatically labelled with the intended topic of the test session as given by the query script used. For example, all queries from a session about ``prostate`` are labeled as ``prostate'' including probe queries.  In this respect the labels capture intended behaviour of queries, rather than attempting an individual interpretation of specific query keywords during a user session. Test data is automatically divided into $7$ folds for processing so that, when presented, reported averages and standard errors are taken over $7$ distinct, randomised sub-samples of test data unless otherwise noted.

 The \PRIEPS estimator in \eqref{eqn:mk001} uses the training dataset as the source of prior knowledge.  During testing we will use the same training data set in all tests for consistent comparison. We do not re-train \PRIEPS during testing as new adverts are encountered. Experimental measurements of \PRIEPS are with respect to this common training set. The experimental value of \PRIEPS for the $k^{th}$ query step also measures the cumulative effects of events for the previous $k-1$ steps.
  
   \begin{table}[h] 
\setlength{\tabcolsep}{0.1cm}
\centering
\caption{Estimated a priori probability distribution for test topics.}
\label{tbl:Test:apriori}
\begin{tabular}{L{3cm}L{2cm}L{3cm}L{2cm}}
\cmidrule[1pt]{1-4}
\textbf{Topic ($c_i$)} & \textbf{$\hat{p}_i$} & \textbf{Topic ($c_i$)} & \textbf{$\hat{p}_i$}    \\ \cmidrule[1pt]{1-4}
anorexia   & 0.0829 & gay        & 0.0835 \\
bankrupt   & 0.0833 & location   & 0.0836 \\
diabetes   & 0.0825 & payday     & 0.0836 \\
disabled   & 0.0836 & prostate   & 0.0826 \\
divorce    & 0.0836 & unemployed & 0.0836  \\
gambling   & 0.0836 & other      & 0.0834 \\\cmidrule[1pt]{1-4}

\multicolumn{2}{r}{\textbf{RMSE}  (with respect to equally likely prior distribution):}  & \multicolumn{2}{l}{0.0004}\\ 
\cmidrule[1pt]{1-4}
\end{tabular}
\end{table}%

To estimate the prior probability distribution of topics during experiments, we solve \eqref{eqn:prior:estimator} for $\lambda$.  Using the training set data, a value of $\lambda \leq 0.01$ solved \eqref{eqn:prior:estimator} with an RMSE of $0.04\%$. The resulting estimated a priori distribution for all test topics is shown in Table~\ref{tbl:Test:apriori}.

\subsection{Comparison with Previous Results}
\label{sec:compare:previous}
We compare the detection capability of the decision function in \eqref{eqn:min:pri:1} with previous results obtained in \cite{mac2015don}. Results in Table~\ref{tbl:old:new}(a) were produced by processing data from \cite{mac2015don} with the \PRIEPS estimator and then using \eqref{eqn:min:pri:1} to decide which topic is detected. We declare a topic $c_i$ has been detected during a query session, consisting of $5$ probe queries, if \emph{at least one} of the $5$ probe queries is detected as topic $c_i$ to allow direct comparison with previous results. For comparison, detection results for the \PRI estimator from  \cite{mac2015don}\footnote{Taken from Table XIV(b) in \cite{mac2015don}}, are reproduced here as Table~\ref{tbl:old:new}(b). 

\begin{table}[h]
\captionsetup{position=bottom}
\caption{Comparison of measured detection rate of \emph{at least one} individual ``Sensitive' topic in a session of $5$ probes.\label{tbl:old:new}}
\setlength{\tabcolsep}{0.18cm}
\begin{scriptsize}

\subfloat[\PRIEPS]{

        \begin{tabular}{@{}lccccccccccc@{}}
        \cmidrule[1pt]{2-12}
\textbf{\shortstack[l]{Reference\\Topic\\{}}}   & \rot{\textbf{anorexia}}  & \rot{\textbf{bankrupt}}   & \rot{\textbf{diabetes}}  & \rot{\textbf{disabled}} & \rot{\textbf{divorce}} &  \rot{\textbf{gambling}} & \rot{\textbf{gay}}   & \rot{\textbf{location}} & \rot{\textbf{payday}} & \rot{\textbf{prostate}} & \rot{\textbf{unemployed}~} \\ \cmidrule[1pt]{1-12}
      \multicolumn{1}{l}{\textbf{True Detect}}      & 100.0\% & 100.0\% & 100.0\% & 100.0\% & 100.0\% & 100.0\% & 100.0\%  & 100.0\% & 100.0\% & 100.0\%  & 100.0\% \\
        \multicolumn{1}{l}{\textbf{False Detect}}     & 0.0\%   &0.0\% & 0.0\%  & 0.0\% & 0.0\%   & 0.0\%   & 0.0\%   & 0.0\%  & 0.0\%   & 0.0\%   & 0.0\%   \\ \cmidrule[1pt]{1-12}                \end{tabular}        
}\\
\subfloat[\PRI]{

        \begin{tabular}{@{}lccccccccccc@{}}
        \cmidrule[1pt]{2-12}
\textbf{\shortstack[l]{Reference\\Topic\\{}}}   & \rot{\textbf{anorexia}}  & \rot{\textbf{bankrupt}}   & \rot{\textbf{diabetes}}  & \rot{\textbf{disabled}} & \rot{\textbf{divorce}} &  \rot{\textbf{gambling}} & \rot{\textbf{gay}}   & \rot{\textbf{location}} & \rot{\textbf{payday}} & \rot{\textbf{prostate}} & \rot{\textbf{unemployed}~} \\ \cmidrule[1pt]{1-12}
\multicolumn{1}{l}{\textbf{True Detect}}  & 97.0\%  & 100.0\% & 100.0\% & 100.0\% & 100.0\% & 100.0\% & 100.0\% & 100.0\% & 100.0\% & 100.0\% & 100.0\% \\
\multicolumn{1}{l}{\textbf{False Detect}} & 4.0\%   & 0.0\%   & 8.0\%   & 0.0\%   & 0.0\%   & 0.0\%   & 0.0\%   & 0.0\%   & 0.0\%   & 0.0\%   & 0.0\%   \\ \cmidrule[1pt]{1-12}
        \end{tabular}
}               
\end{scriptsize}
\end{table}%

In Table~~\ref{tbl:old:new}(b) ``Sensitive' topic correct detection rates of $97 - 100\%$ per query session are reported using the techniques in \cite{mac2015don}. Noise from incorrect detections, where test scripts labelled as ``Other'' are incorrectly detected as ``Sensitive' lie between $0 - 8\%$, depending on topic. 

Applying the same detection methodology and data as \cite{mac2015don}, Table~~\ref{tbl:old:new}(a) shows that True Detection rates using \PRIEPS estimator and \eqref{eqn:min:pri:1} are better or equal for each topic than the rates reported in \cite{mac2015don} in Table~\ref{tbl:old:new}(b). The False Detection rates are also better or equal in the case of all topics tested. Overall the results obtained using  \PRIEPS estimator and \eqref{eqn:min:pri:1} compare favourably with the results obtained in \cite{mac2015don}.

For the remainder of this paper, we use \eqref{eqn:min:pri:1} as a decision function for detection unless otherwise noted.

\subsection{Establishing a Test Baseline}
\label{sec:test:setup}

We begin by considering a setup where we submit sequences of queries, interleaved with probe queries, in what we term a ``no click, no noise'' model where there is no injected noise and where no items are clicked on any of the search results pages. This model provides a baseline where the queries alone are available to the recommender from which to learn about a user session as it progresses. We will compare further results against this baseline. 

Table~\ref{tbl:distinguish:baseline} shows the True Detection and False Detection rates for each sensitive topic considered. Results in Table~\ref{tbl:distinguish:baseline} are calculated by taking seven distinct, randomised test folds from the relevant test data set. Note that the detection rates reported in this section, and from now on, are for the \PRIEPS scores for individual probe queries.   Also, we note that we use the shorthand ``$95\%$ True Detection'' for $95\%$ of \emph{individual probe queries} were correctly detected as ``Sensitive' given the reference test topic was ``Sensitive'.   This is in contrast to the previous section where the detection rates are calculated on the basis of five probe queries, used to allow comparison with \cite{mac2015don}.  Given the detection rate $p$ for an individual probe we can estimate the rate based on five probes as $1-(1-p)^5$, e.g. when $p=85\%$ then $1-(1-p)^5=99.99\%$.

\begin{table*}[h]
\centering{
\setlength{\tabcolsep}{0.02cm}
\begin{scriptsize}
\caption{Baseline measured detection rate of individual ``Sensitive' topic probe queries (Averaged over $7$ test data folds).\label{tbl:distinguish:baseline}}
        \begin{tabular}{@{}L{2cm}C{1cm}C{1cm}C{1cm}C{1cm}C{1cm}C{1cm}C{1cm}C{1cm}C{1cm}C{1cm}C{1cm}|C{1cm}@{}}
        \cmidrule[1pt]{1-13}

\textbf{Reference Topic} & \rot{\textbf{anorexia}} & \rot{\textbf{bankrupt}} & \rot{\textbf{diabetes}} & \rot{\textbf{disabled}} & \rot{\textbf{divorce}} & \rot{\textbf{gambling}} & \rot{\textbf{gay}} & \rot{\textbf{location}} & \rot{\textbf{payday}} & \rot{\textbf{prostate}} & \rot{\textbf{unemployed}} & \rot{\textbf{Average}}  \\ \cmidrule[1pt]{1-13}

\textbf{True Detect}    & 100.0\%  & 100.0\%  & 100.0\%  & { }82.9\%  & 100.0\%  & { }94.8\%  & 100.0\%  & { }93.1\%  & { }93.8\%  & 100.0\%  & { }92.7\%  & 96.1\%  \\

\textbf{False Detect}   & { }0.0\%  & { }0.0\%  & { }0.0\%  & { }0.0\%  & { }0.0\%  & { }0.0\%  & { }0.0\%  & { }0.0\%  & { }0.0\%  & { }0.0\%  & { }0.0\%  & 0.0\%  \\ \cmidrule[0.5pt]{1-13}

\textbf{SEM Range}  & & \multicolumn{5}{l}{ \textbf{False Detect:}  0.0\% - 0.0\%}  & & \multicolumn{5}{l}{\textbf{True Detect:}  0.0\% - 0.5\%, } \\ \cmidrule[0.5pt]{1-13}
        \end{tabular}        
\end{scriptsize}
}
\end{table*}%

The baseline results in Table~\ref{tbl:distinguish:baseline} indicate that testing for \emph{any} ``Sensitive' topic (True Detection Rate $= 96\%$, in the final column) will result in True Detection at least 4 out of 5 times in a standard query session with probability $>95\%$.  The same holds at individual ``Sensitive' topic level for all topics apart from ``Disabled'' where True Detection occurs at least 2 out of 5 times with probability $95\%$.

\begin{table*}[h]
\centering{
\setlength{\tabcolsep}{0.02cm}
\begin{scriptsize} 
\caption{Baseline lower bounds on mutual plausible deniability ($\delta^{*}$) estimated from maximum and minimum measured values of \PRIEPS\label{tbl:plausible:detect:baseline}}
\begin{tabular}{@{}L{2cm}C{1cm}C{1cm}C{1cm}C{1cm}C{1cm}C{1cm}C{1cm}C{1cm}C{1cm}C{1cm}C{1cm}|C{1cm}@{}}
\cmidrule[1pt]{1-13}

\textbf{Reference Topic(c)} & \rot{\textbf{anorexia}} & \rot{\textbf{bankrupt}} & \rot{\textbf{diabetes}} & \rot{\textbf{disabled}} & \rot{\textbf{divorce}} & \rot{\textbf{gambling}} & \rot{\textbf{gay}} & \rot{\textbf{location}} & \rot{\textbf{payday}} & \rot{\textbf{prostate}} & \rot{\textbf{unemployed}} & \rot{\textbf{Average}}  \\ \cmidrule[1pt]{1-13} \\

\cmidrule[1pt]{1-13} 

\textbf{$\delta^{*}$}   & 0.10 & 0.18  & 0.10  & 0.18  & 0.18  & 0.18  & 0.12  & 0.18  & 0.18 & 0.12  & 0.18  & 0.16  \\ \cmidrule[0.5pt]{1-13}

\textbf{SEM Range}  & &  & & \multicolumn{5}{l}{\textbf{$\delta^{*}$: }  0.0 - 0.0} \\ \cmidrule[0.5pt]{1-13}

\end{tabular}    

\end{scriptsize}
}
\end{table*}%

Turning to Mutual Plausible Deniability, we apply expression~\eqref{eqn:mpd:3} to estimate the \emph{best possible} mutual plausible deniability $\delta^*$ available to the user for each topic.   To do this we estimate $\delta^*$ as the maximum of $\frac{\FPRIHAT^*-\FPRIHATMIN}{N+1}$ over the response to all probe queries in a session, where $\FPRIHAT^*$ is the maximum estimated \PRIEPS score on page $k$ and $\FPRIHATMIN$ the minimum.  In Table~\ref{tbl:plausible:detect:baseline} estimates of $\delta^*$ are shown based on the maximum measured values of \PRIEPS during testing for the ``no click, no noise'' model. 
Results are reported for the plausible deniability of interest in a topic $c$ \emph{given all other} topics -- ``Sensitive'' topics plus the ``Other'' topic. In this case, $\delta^{*}$ indicates how well a user may plausibly deny their interest in topic $c$ versus interest in \emph{all other} topics being tested.
The average value of $\delta^{*}$ in Table~\ref{tbl:plausible:detect:baseline} means that on average, there is a difference of $16\%$ between the posterior probabilities that a user is interested in topic $c$ versus having an interest in any other topic. We will use this value to compare the effectiveness of defences of plausible deniability in reducing the values of $\delta^{*}$.   


\subsection{The Effect of Random Noise Injection}
\label{sec:testing:noise}

Following from Section~\ref{sec:by:design}, we now consider the impact of injecting non-informative queries chosen at random from our popular query list into a user session. We simply refer to these as ``random noise'' queries. We consider three levels of random noise queries for testing purposes:

\begin{table*}[h]
\centering{
\captionsetup{position=bottom}
\caption{Comparison of measured detection rate for various noise models (all with ``No Clicks'', averages over $7$ test folds).\label{tbl:distinguish:noise}}
\setlength{\tabcolsep}{0.02cm}
\begin{scriptsize} 
    \subfloat[Low Noise] {
        \centering
\begin{tabular}{@{}L{2cm}C{1cm}C{1cm}C{1cm}C{1cm}C{1cm}C{1cm}C{1cm}C{1cm}C{1cm}C{1cm}C{1cm}|C{1cm}@{}}
\cmidrule[1pt]{1-13}

\textbf{Reference Topic} & \rot{\textbf{anorexia}} & \rot{\textbf{bankrupt}} & \rot{\textbf{diabetes}} & \rot{\textbf{disabled}} & \rot{\textbf{divorce}} & \rot{\textbf{gambling}} & \rot{\textbf{gay}} & \rot{\textbf{location}} & \rot{\textbf{payday}} & \rot{\textbf{prostate}} & \rot{\textbf{unemployed}} & \rot{\textbf{Average}}  \\ \cmidrule[1pt]{1-13} \\

\cmidrule[1pt]{1-13} 

\textbf{True Detect}    & 100.0\%  & 100.0\%  & 100.0\%  & { }94.3\%  & { }94.0\%  & 100.0\%  & 100.0\%  & 100.0\%  & 100.0\%  & 100.0\%  & 100.0\%  & 98.9\%  \\

\textbf{False Detect}   & { }0.0\%  & { }0.0\%  & { }0.0\%  & { }0.0\%  & { }0.0\%  & { }0.0\%  & { }0.0\%  & { }0.0\%  & { }0.0\%  & { }0.0\%  & { }0.0\%  & 0.0\%  \\ \cmidrule[0.5pt]{1-13}

\textbf{SEM Range}  & & \multicolumn{5}{l}{ \textbf{False Detect:}  0.0\% - 0.0\%}  & & \multicolumn{5}{l}{\textbf{True Detect:}  0.0\% - 0.2\%, } \\ \cmidrule[0.5pt]{1-13}        \end{tabular}        
        \label{tbl:low:noise} 
        }\\ 
               
     \subfloat[Medium Noise] {
       \centering 
\begin{tabular}{@{}L{2cm}C{1cm}C{1cm}C{1cm}C{1cm}C{1cm}C{1cm}C{1cm}C{1cm}C{1cm}C{1cm}C{1cm}|C{1cm}@{}}
\cmidrule[1pt]{1-13}

\textbf{True Detect}    & 100.0\%  & 100.0\%  & 100.0\%  & { }95.2\%  & { }92.4\%  & { }94.1\%  & 100.0\%  & 100.0\%  & { }96.2\%  & 100.0\%  & 100.0\%  & 98.0\%  \\

\textbf{False Detect}   & { }0.0\%  & { }0.0\%  & { }0.0\%  & { }0.0\%  & { }0.0\%  & { }0.0\%  & { }0.0\%  & { }0.0\%  & { }0.0\%  & { }0.0\%  & { }0.0\%  & 0.0\%  \\ \cmidrule[0.5pt]{1-13}

\textbf{SEM Range}  & & \multicolumn{5}{l}{ \textbf{False Detect:}  0.0\% - 0.0\%}  & & \multicolumn{5}{l}{\textbf{True Detect:}  0.0\% - 0.1\%, } \\ \cmidrule[0.5pt]{1-13}
        \end{tabular}
        \label{tbl:med:noise}
        } \\
      \subfloat[High Noise] {
       \centering 
\begin{tabular}{@{}L{2cm}C{1cm}C{1cm}C{1cm}C{1cm}C{1cm}C{1cm}C{1cm}C{1cm}C{1cm}C{1cm}C{1cm}|C{1cm}@{}}
\cmidrule[1pt]{1-13}
        

\textbf{True Detect}    & 100.0\%  & 100.0\%  & 100.0\%  & 100.0\%  & 100.0\%  & { }95.9\%  & { }98.9\%  & 100.0\%  & 100.0\%  & 100.0\%  & 100.0\%  & 99.5\%  \\

\textbf{False Detect}   & { }0.0\%  & { }0.0\%  & { }0.0\%  & { }0.0\%  & { }0.0\%  & { }0.0\%  & { }0.0\%  & { }0.0\%  & { }0.0\%  & { }0.0\%  & { }0.0\%  & 0.0\%  \\ \cmidrule[0.5pt]{1-13}

\textbf{SEM Range}  & & \multicolumn{5}{l}{ \textbf{False Detect:}  0.0\% - 0.0\%}  & & \multicolumn{5}{l}{\textbf{True Detect:}  0.0\% - 0.1\%, } \\ \cmidrule[0.5pt]{1-13}
        \end{tabular}
        \label{tbl:high:noise}
        }                              
\end{scriptsize}
}
\end{table*}%
\begin{description}[labelsep=1em, itemsep=2pt,parsep=2pt]
	\item [``Low Noise''] The automation scripts select uninteresting queries uniformly at random from the  top-query list and inject a single random noise query after every topic-specific query so that the ``signal-to-noise ratio'' of sensitive to noise queries in this case is $1:1$.
	\item [``Medium Noise''] Here the automation scripts inject two randomly selected  queries after each topic-specific query for a signal to noise ration of $1:2$.
	\item [``High Noise''] In this noise-model with the highest noise setting, three random noise queries are injected, resulting in a signal-to-noise ratio of $1:3$.
\end{description} 

Note also that the automation scripts were configured to ensure the relevant number of noise queries was always injected \emph{immediately before} each probe query.   Our intention was to construct a ``worst case'' for detection of learning, where probe queries are always separated from sensitive user queries by the specified number of noise queries.

\begin{table*}[h]
\centering{
\captionsetup{position=bottom}
\caption{Lower bounds on plausible deniability ($\delta(c)$) estimated from maximum and minimum measured values of \PRIEPS for various random query noise models\label{tbl:plausible:detect:noise}}
\setlength{\tabcolsep}{0.02cm}
\begin{scriptsize} 
      \subfloat[Low Noise] {
        \centering
\begin{tabular}{@{}L{2cm}C{1cm}C{1cm}C{1cm}C{1cm}C{1cm}C{1cm}C{1cm}C{1cm}C{1cm}C{1cm}C{1cm}|C{1cm}@{}}
\cmidrule[1pt]{1-13}

\textbf{Reference Topic(c)} & \rot{\textbf{anorexia}} & \rot{\textbf{bankrupt}} & \rot{\textbf{diabetes}} & \rot{\textbf{disabled}} & \rot{\textbf{divorce}} & \rot{\textbf{gambling}} & \rot{\textbf{gay}} & \rot{\textbf{location}} & \rot{\textbf{payday}} & \rot{\textbf{prostate}} & \rot{\textbf{unemployed}} & \rot{\textbf{Average}}  \\ \cmidrule[1pt]{1-13} \\

\cmidrule[1pt]{1-13}

\textbf{$\delta^{*}$}   & 0.08  & 0.18  & 0.12  & 0.18  & 0.18  & 0.18  & 0.12  & 0.18  & 0.18 & 0.10  & 0.18  & 0.16  \\ \cmidrule[0.5pt]{1-13}

\textbf{SEM Range}  & &   & & \multicolumn{5}{l}{\textbf{$\delta^{*}$: }  0.0 - 0.01 } \\ \cmidrule[0.5pt]{1-13}

        \end{tabular}
        \label{tbl:plausible:detect:low:noise}
        } \\

     \subfloat[Medium Noise] {
       \centering 
\begin{tabular}{@{}L{2cm}C{1cm}C{1cm}C{1cm}C{1cm}C{1cm}C{1cm}C{1cm}C{1cm}C{1cm}C{1cm}C{1cm}|C{1cm}@{}}
\cmidrule[1pt]{1-13}

\textbf{$\delta^{*}$}   & 0.14  & 0.18  & 0.12  & 0.18  & 0.18  & 0.18  & 0.18  & 0.18  & 0.18 & 0.10  & 0.18  & 0.16  \\ \cmidrule[0.5pt]{1-13}

\textbf{SEM Range}  & & & & \multicolumn{5}{l}{\textbf{$\delta^{*}$: }  0.0 - 0.01 } \\ \cmidrule[0.5pt]{1-13}

        \end{tabular}
        \label{tbl:plausible:detect:med:noise}
        } \\
      \subfloat[High Noise] {
       \centering 
\begin{tabular}{@{}L{2cm}C{1cm}C{1cm}C{1cm}C{1cm}C{1cm}C{1cm}C{1cm}C{1cm}C{1cm}C{1cm}C{1cm}|C{1cm}@{}}
\cmidrule[1pt]{1-13}
        

\textbf{$\delta^{*}$}   & 0.06  & 0.18  & 0.10  & 0.18  & 0.18  & 0.18  & 0.18  & 0.18  & 0.18 & 0.08  & 0.18  & 0.18  \\ \cmidrule[0.5pt]{1-13}

\textbf{SEM Range}  & & & & \multicolumn{5}{l}{\textbf{$\delta^{*}$: }  0.0 - 0.01 } \\ \cmidrule[0.5pt]{1-13}
        \end{tabular}
        \label{tbl:plausible:detect:high:noise}
        }                              
\end{scriptsize}
}
\end{table*}%

Table~\ref{tbl:distinguish:noise} shows measured True Detection and False Detection rates as the degree of noise injected is varied.   As before, links in the response pages are not clicked on and this data can be directly compared against the baseline in Table~\ref{tbl:distinguish:baseline}. 
It can be seen that none of the tested random noise models provides consistent reduction in the True Detection Rate for all topics, with this rate remaining high for the majority of topics across all noise-models.  The results in Table~\ref{tbl:distinguish:noise} indicate that even the ``High Random Noise'' model fails to reduce the True Detection Rate of \PRIEPS. False detections are also low for all noise-models suggesting that accuracy is also high.

In Table~\ref{tbl:plausible:detect:noise} estimates of the lower bound on mutual plausible deniability are reported based on the minimum and maximum measured values of \PRIEPS during testing. Comparing Table~\ref{tbl:plausible:detect:noise} with the baseline results in Table~\ref{tbl:plausible:detect:baseline}, injecting random noise has not improved user mutual plausible deniability significantly on average. We conclude that injection of random noise, even at substantial levels, does not provide a useful defence for mutual plausible deniability in our experiments. 

\subsection{The Effect of Click Strategies}
\label{sec:testing:clicks}

We now consider whether it is possible to disrupt search engine learning by careful clicking of the links on response pages.  

Intuitively, from the search engine's point of view clicking on links is a form of active feedback by a user and so potentially informative of user interests.  This is especially true when, for example, a user is carrying out exploratory search where their choice of keywords is not yet well-tuned to their topic of interest.  Previous studies have also indicated that there is good reason to believe that user clicks on links are an important input into recommender system learning. In \cite{mac2015don} (Section $6.4$), user clicks emulated using the ``Click Relevant'' click-model were reported to result in increases of $60\%$ -- $450\%$ in the advert content, depending on the ``Sensitive'  topic tested. 

We consider four different click strategies to emulate a range of user click behaviours:

\begin{description}
	\item [``No Click''] No items are clicked on in the response page to a query. This user click-model does not provide additional user preference information to the recommender system due to click behaviour. This click model is used in the baseline measurements presented in Sections~\ref{sec:test:setup}.
	\item [``Click Relevant''] Given the response page to a query, for each search result and advert we calculate the Term-Frequency (TF) of the visible text with respect to the keywords associated with the test session topic of interest, see Table~\ref{tbl:scripts}. When $TF > 0.1$ for an item, the item is clicked, otherwise it is not clicked. This user click-model provides relevant feedback to the recommender system about the information goal of the user.
	\item [``Click Non-relevant''] TF is calculated for each item with respect to the category of interest for the session in question as for the ``Click Relevant'' click-model, \emph{except} that items are clicked when the TF score is below the threshold and so they are deemed non-relevant to the topic, that is when $TF \leq 0.1$. This user click-model attempts to confuse the recommender system by providing feedback that is not relevant to the true topic of interest to the user.
	\item [``Click All''] All items on the response page for a query are clicked. This user click-model gives the recommender system a ``noisy'' click signal, including clicks on items relevant and non-relevant to the user's information goal.
	\item [``Click 2 Random Items''] Two items appearing on the response page for a query are selected uniformly at random with replacement and clicked.  
\end{description} 

In all cases, when uninteresting, ``noise'' queries are included in a query session, the relevant user click-strategy is also applied to the result pages of these queries. In this way we avoid providing an obvious signal to the recommender system to differentiate uninteresting queries from queries related to ``Sensitive' topics.  Items on the result page in response to probe queries are \emph{not} clicked so that the probe query does not provide any additional information to the recommender system.

Table~\ref{tbl:distinguish:clicks} shows measured True Detection and False Detection rates for each strategy.  It can be seen that the True Detection rate remains high for all click strategies tested, while the False Detection rate remains consistently low, indicating that detection remains accurate.  

\begin{table*}[h]
\captionsetup{position=bottom}
\caption{Comparison of measured detection rate for various click models, all with ``no noise'', averages over $7$ test folds.\label{tbl:distinguish:clicks}}
\setlength{\tabcolsep}{0.02cm}
\begin{scriptsize}
     \subfloat[Click Relevant] {
\begin{tabular}{@{}L{2cm}C{1cm}C{1cm}C{1cm}C{1cm}C{1cm}C{1cm}C{1cm}C{1cm}C{1cm}C{1cm}C{1cm}|C{1cm}@{}}
\cmidrule[1pt]{1-13}

\textbf{Reference Topic} & \rot{\textbf{anorexia}} & \rot{\textbf{bankrupt}} & \rot{\textbf{diabetes}} & \rot{\textbf{disabled}} & \rot{\textbf{divorce}} & \rot{\textbf{gambling}} & \rot{\textbf{gay}} & \rot{\textbf{location}} & \rot{\textbf{payday}} & \rot{\textbf{prostate}} & \rot{\textbf{unemployed}} & \rot{\textbf{Average}}  \\ \cmidrule[1pt]{1-13} \\

\cmidrule[1pt]{1-13} 

\textbf{True Detect}    & 100.0\%  & { }98.7\%  & 100.0\%  & { }95.1\%  & 100.0\%  & 100.0\%  & 100.0\%  & 100.0\%  & 100.0\%  & 100.0\%  & 100.0\%  & 99.4\%  \\

\textbf{False Detect}   & { }0.0\%  & { }0.0\%  & { }0.0\%  & { }0.0\%  & { }0.0\%  & { }0.0\%  & { }0.0\%  & { }0.0\%  & { }0.0\%  & { }0.0\%  & { }0.0\%  & 0.0\%  \\ \cmidrule[0.5pt]{1-13}

\textbf{SEM Range}  & & \multicolumn{5}{l}{ \textbf{False Detect:}  0.0\% - 0.0\%}  & & \multicolumn{5}{l}{\textbf{True Detect:}  0.0\% - 0.2\%, } \\ \cmidrule[0.5pt]{1-13}
        
        \end{tabular}        
        \label{tbl:click:relevant} 
        }\\        
     \subfloat[Click Non-relevant] {
       \centering 
        \begin{tabular}{@{}L{2cm}C{1cm}C{1cm}C{1cm}C{1cm}C{1cm}C{1cm}C{1cm}C{1cm}C{1cm}C{1cm}C{1cm}|C{1cm}@{}}
         \cmidrule[1pt]{1-13}


\textbf{True Detect}    & 100.0\%  & { }96.6\%  & 100.0\%  & 100.0\%  & 100.0\%  & { }92.2\%  & 100.0\%  & { }98.1\%  & 100.0\%  & 100.0\%  & 100.0\%  & 98.8\%  \\

\textbf{False Detect}   & { }0.0\%  & { }0.0\%  & { }0.0\%  & { }0.0\%  & { }0.0\%  & { }0.0\%  & { }0.0\%  & { }0.0\%  & { }0.0\%  & { }0.0\%  & { }0.0\%  & 0.0\%  \\ \cmidrule[0.5pt]{1-13}

\textbf{SEM Range}  & & \multicolumn{5}{l}{ \textbf{False Detect:}  0.0\% - 0.0\%}  & & \multicolumn{5}{l}{\textbf{True Detect:}  0.0\% - 0.1\%, } \\ \cmidrule[0.5pt]{1-13}

        \end{tabular}
        \label{tbl:click:non:relevant}
        } \\
      \subfloat[Click All] {
       \centering 
        \begin{tabular}{@{}L{2cm}C{1cm}C{1cm}C{1cm}C{1cm}C{1cm}C{1cm}C{1cm}C{1cm}C{1cm}C{1cm}C{1cm}|C{1cm}@{}}
        \cmidrule[1pt]{1-13} 
        

\textbf{True Detect}    & 100.0\%  & 100.0\%  & 100.0\%  & { }84.7\%  & { }84.4\%  & { }94.7\%  & { }96.8\%  & { }69.8\%  & 100.0\%  & 100.0\%  & { }73.2\%  & 91.2\%  \\

\textbf{False Detect}   & { }0.0\%  & { }0.0\%  & { }0.0\%  & { }0.0\%  & { }0.0\%  & { }0.0\%  & { }0.0\%  & { }0.0\%  & { }0.0\%  & { }0.0\%  & { }0.0\%  & 0.0\%  \\ \cmidrule[0.5pt]{1-13}

\textbf{SEM Range}  & & \multicolumn{5}{l}{ \textbf{False Detect:}  0.0\% - 0.0\%}  & & \multicolumn{5}{l}{\textbf{True Detect:}  0.0\% - 1.2\%, } \\ \cmidrule[0.5pt]{1-13}

        \end{tabular}
        \label{tbl:click:all}
        }  \\
      \subfloat[Click 2 Random Items] {
       \centering 
        \begin{tabular}{@{}L{2cm}C{1cm}C{1cm}C{1cm}C{1cm}C{1cm}C{1cm}C{1cm}C{1cm}C{1cm}C{1cm}C{1cm}|C{1cm}@{}}
        \cmidrule[1pt]{1-13} 
        

\textbf{True Detect}    & { }99.0\%  & 100.0\%  & 100.0\%  & { }90.7\%  & { }93.0\%  & { }83.9\%  & 100.0\%  & 100.0\%  & { }94.7\%  & 100.0\%  & { }93.4\%  & 95.9\%  \\

\textbf{False Detect}   & { }0.0\%  & { }0.0\%  & { }0.0\%  & { }0.0\%  & { }0.0\%  & { }0.0\%  & { }0.0\%  & { }0.0\%  & { }0.0\%  & { }0.0\%  & { }0.0\%  & 0.0\%  \\ \cmidrule[0.5pt]{1-13}

\textbf{SEM Range}  & & \multicolumn{5}{l}{ \textbf{False Detect:}  0.0\% - 0.0\%}  & & \multicolumn{5}{l}{\textbf{True Detect:}  0.0\% - 0.2\%, } \\ \cmidrule[0.5pt]{1-13}
        \end{tabular}
        \label{tbl:click:random:2}
        }                                     
\end{scriptsize}
\end{table*}%
 
Comparing the baseline ``No Click'' average result in the final column of Table~\ref{tbl:distinguish:baseline} with the  same result in each of the subtables in Table~\ref{tbl:distinguish:clicks} indicate that the ``Click Relevant'' and ``Click Non-relevant'' click-models \emph{increase} the True Detect Rate by $2\%$ -- $3\%$ on average over all ``Sensitive' topics.  That is, these click strategies increase evidence for learning by the search engine.    The ``Click All'' model results in the largest average reduction of $7\%$ indicating that clicking on every item may cause some confusion; note that the ``False Detect'' rate remains low so we do not observe false interest in other sensitive topics as a result of the click model.   Of the eleven sensitive topics tested, three topics have significantly reduced True Detect rates under the ``Click All'' model -- ``gambling'', ``location'' and ``unemployed''.  This suggests that ``Click All'' may provide some protection for some topics, however it is difficult to predict which.    However, the protection is rather weak with the True Detect rates remaining high (greater than about 70\% per individual probe query) for all topics.



\begin{table*}[h]
\centering{
\captionsetup{position=bottom}
\caption{Lower bounds on plausible deniability, $\delta^{*}$, estimated from maximum and minimum measured values of \PRIEPS for various user click models\label{tbl:plausible:detect:clicks}}
\setlength{\tabcolsep}{0.02cm}
\begin{scriptsize} 
      \subfloat[Click Relevant] {
        \centering
\begin{tabular}{@{}L{2cm}C{1cm}C{1cm}C{1cm}C{1cm}C{1cm}C{1cm}C{1cm}C{1cm}C{1cm}C{1cm}C{1cm}|C{1cm}@{}}
\cmidrule[1pt]{1-13}

\textbf{Reference Topic(c)} & \rot{\textbf{anorexia}} & \rot{\textbf{bankrupt}} & \rot{\textbf{diabetes}} & \rot{\textbf{disabled}} & \rot{\textbf{divorce}} & \rot{\textbf{gambling}} & \rot{\textbf{gay}} & \rot{\textbf{location}} & \rot{\textbf{payday}} & \rot{\textbf{prostate}} & \rot{\textbf{unemployed}} & \rot{\textbf{Average}}  \\ \cmidrule[1pt]{1-13} \\

\cmidrule[1pt]{1-13}

\textbf{$\delta^{*}$}   & 0.06  & 0.18  & 0.06  & 0.26  & 0.18  & 0.18  & 0.18  & 0.18  & 0.18 & 0.06  & 0.18  & 0.16  \\ \cmidrule[0.5pt]{1-13}

\textbf{SEM Range}  & &   & & \multicolumn{5}{l}{\textbf{$\delta^{*}$: }  0.0 - 0.0 } \\ \cmidrule[0.5pt]{1-13}

        \end{tabular}
        \label{tbl:plausible:detect:click:relevant}
        } \\

     \subfloat[Click Non-Relevant] {
       \centering 
\begin{tabular}{@{}L{2cm}C{1cm}C{1cm}C{1cm}C{1cm}C{1cm}C{1cm}C{1cm}C{1cm}C{1cm}C{1cm}C{1cm}|C{1cm}@{}}
\cmidrule[1pt]{1-13}

\textbf{$\delta^{*}$}   & 0.12  & 0.20 & 0.12  & 0.22  & 0.18  & 0.22  & 0.12  & 0.22  & 0.20  & 0.12  & 0.22  & 0.18  \\ \cmidrule[0.5pt]{1-13}

\textbf{SEM Range}  & & & & \multicolumn{5}{l}{\textbf{$\delta^{*}$: }  0.0 - 0.0 } \\ \cmidrule[0.5pt]{1-13}

        \end{tabular}
        \label{tbl:plausible:detect:click:non:relevant}
        } \\
        
     \subfloat[Click All] {
       \centering 
\begin{tabular}{@{}L{2cm}C{1cm}C{1cm}C{1cm}C{1cm}C{1cm}C{1cm}C{1cm}C{1cm}C{1cm}C{1cm}C{1cm}|C{1cm}@{}}
\cmidrule[1pt]{1-13}

\textbf{$\delta^{*}$}   & 0.10  & 0.18  & 0.08  & 0.18  & 0.18  & 0.18  & 0.12  & 0.18  & 0.18 & 0.06  & 0.18  & 0.16  \\ \cmidrule[0.5pt]{1-13}

\textbf{SEM Range}  & & & & \multicolumn{5}{l}{\textbf{$\delta^{*}$: }  0.0 - 0.0 } \\ \cmidrule[0.5pt]{1-13}

        \end{tabular}
        \label{tbl:plausible:detect:click:all}
        } \\
        
      \subfloat[Click 2 Random Items] {
       \centering 
\begin{tabular}{@{}L{2cm}C{1cm}C{1cm}C{1cm}C{1cm}C{1cm}C{1cm}C{1cm}C{1cm}C{1cm}C{1cm}C{1cm}|C{1cm}@{}}
\cmidrule[1pt]{1-13}
        

\textbf{$\delta^{*}$}   & 0.10  & 0.12  & 0.12  & 0.20  & 0.18  & 0.18  & 0.12  & 0.18  & 0.18 & 0.10  & 0.20  & 0.16  \\ \cmidrule[0.5pt]{1-13}

\textbf{SEM Range}  & & & & \multicolumn{5}{l}{\textbf{$\delta^{*}$: }  0.0 - 0.0 } \\ \cmidrule[0.5pt]{1-13}
        \end{tabular}
        \label{tbl:plausible:detect:click:random:2}
        }                              
\end{scriptsize}
}
\end{table*}%

For mutual plausible deniability, results are reported in Table~\ref{tbl:plausible:detect:clicks}. The average measured values of $\delta^{*}$ are consistent with the baseline reported in Table~\ref{tbl:plausible:detect:baseline}, suggesting that the click strategies tested are not effective on average at improving user ability to plausibly deny their interest in a topic.  The value of $\delta$ is significantly smaller for the topics \emph{diabetes} and \emph{prostate} for all click models. An inspection of result pages for both of these topics indicates that some of the queries for these topics contained multiple search terms and so are possibly confusing for the search engine. For example, in the case of \emph{prostate}, the query ``should i \textbf{exercise} more with \textbf{prostate cancer}''. In such cases the result was that no adverts were returned resulting in lower \PRIEPS scores over the query session. 

It would appear in summary, that clicks transmit information to the search engine, but not as strongly or as consistently as does a revealing query. Consequently none of the user click-models tested appear to offer adequate and sustained protection from ``Sensitive' topic detection. The ability of \PRIEPS to correctly identify ``Other'' or uninteresting queries for each of the click-models tested suggests that it is possible to filter out uninteresting or noise queries with high accuracy. Of the click-models tested, only the ``Click All'' model provides a comparable level of mutual plausible deniability with the baseline ``No Click'' model, so that if the user must click on items then our experiments suggest the least damaging option to protect plausible deniability is to click all items.   

\subsection{The Effect of Proxy Topics}
\label{sec:proxy:topic}

The next privacy protection strategy we consider is the introduction of proxy topics. In this case sequences of queries, each sequence related to a single uninteresting proxy topic topic, are injected into a user session.   The idea here is that each sequence of queries emulates a user session on a proxy topic, and so hopefully misdirects learning by the search engine of user interests that is based on sessions.  In addition, the results in Section \ref{sec:testing:noise} are relevant here since they suggest that isolated queries on a topic tend not to provoke search engine learning and so the opportunity exists to exploit that.  Namely, by interspersing sensitive queries within sessions on uninteresting proxy topics the isolated sensitive queries will hopefully not provoke learning whereas the uninteresting proxy sessions will and in this way we can misdirect learning by the search engine.


In out tests the following proxy topics are used:
\begin{description}
	\item[\textbf{tickets}] Searching for tickets for events in a local Dublin venue called Croke Park
	\item[\textbf{vacation}] Queries related to a vacation in France such as flights and accommodation.
	\item[\textbf{car}] Searches by a user seeking to trade in and change their car.
\end{description}
and related queries are constructed in the same way as for the sensitive topics in Table \ref{tbl:scripts}. The queries associated with each proxy topic are given in Appendix \ref{sec:backm:proxy:topics:queries}.

\begin{table*}[h]
\captionsetup{position=bottom}
\caption{Comparison of measured detection rate for proxy topics and for various click models, all with ``no noise'', averages over $7$ test folds.\label{tbl:plausible:detect:clicks}}
\setlength{\tabcolsep}{0.02cm}
\begin{scriptsize}
     \subfloat[Click Relevant] {
\begin{tabular}{@{}L{2cm}C{1cm}C{1cm}C{1cm}C{1cm}C{1cm}C{1cm}C{1cm}C{1cm}C{1cm}C{1cm}C{1cm}|C{1cm}@{}}
\cmidrule[1pt]{1-13}

\textbf{Reference Topic} & \rot{\textbf{anorexia}} & \rot{\textbf{bankrupt}} & \rot{\textbf{diabetes}} & \rot{\textbf{disabled}} & \rot{\textbf{divorce}} & \rot{\textbf{gambling}} & \rot{\textbf{gay}} & \rot{\textbf{location}} & \rot{\textbf{payday}} & \rot{\textbf{prostate}} & \rot{\textbf{unemployed}} & \rot{\textbf{Average}}  \\ \cmidrule[1pt]{1-13} \\

\cmidrule[1pt]{1-13} 

\textbf{True Detect}   & { }0.0\%  & { }0.0\%  & { }0.0\%  & { }0.0\%  & { }0.0\%  & { }0.0\%  & { }0.0\%  & { }0.0\%  & { }0.0\%  & { }0.0\%  & { }0.0\%  & 0.0\%  \\

\textbf{False Detect}   & { }0.0\%  & { }0.0\%  & { }0.0\%  & { }0.0\%  & { }0.0\%  & { }0.0\%  & { }0.0\%  & { }0.0\%  & { }0.0\%  & { }0.0\%  & { }0.0\%  & 0.0\%  \\ \cmidrule[0.5pt]{1-13}

\textbf{SEM Range}  & & \multicolumn{5}{l}{ \textbf{False Detect:}  0.0\% - 0.0\%}  & & \multicolumn{5}{l}{\textbf{True Detect:}  0.0\% - 0.2\%, } \\ \cmidrule[0.5pt]{1-13}
        
        \end{tabular}        
        \label{tbl:plausible:detect:click:relevant} 
        }\\        
     \subfloat[Click Non-relevant] {
       \centering 
        \begin{tabular}{@{}L{2cm}C{1cm}C{1cm}C{1cm}C{1cm}C{1cm}C{1cm}C{1cm}C{1cm}C{1cm}C{1cm}C{1cm}|C{1cm}@{}}
         \cmidrule[1pt]{1-13}


\textbf{True Detect}   & { }0.0\%  & { }0.0\%  & { }0.0\%  & { }0.0\%  & { }0.0\%  & { }0.0\%  & { }0.0\%  & { }0.0\%  & { }0.0\%  & { }0.0\%  & { }0.0\%  & 0.0\%  \\ 

\textbf{False Detect}   & { }0.0\%  & { }0.0\%  & { }0.0\%  & { }0.0\%  & { }0.0\%  & { }0.0\%  & { }0.0\%  & { }0.0\%  & { }0.0\%  & { }0.0\%  & { }0.0\%  & 0.0\%  \\ \cmidrule[0.5pt]{1-13}

\textbf{SEM Range}  & & \multicolumn{5}{l}{ \textbf{False Detect:}  0.0\% - 0.0\%}  & & \multicolumn{5}{l}{\textbf{True Detect:}  0.0\% - 0.1\%, } \\ \cmidrule[0.5pt]{1-13}

        \end{tabular}
        \label{tbl:plausible:detect:click:non:relevant}
        } \\
      \subfloat[Click All] {
       \centering 
        \begin{tabular}{@{}L{2cm}C{1cm}C{1cm}C{1cm}C{1cm}C{1cm}C{1cm}C{1cm}C{1cm}C{1cm}C{1cm}C{1cm}|C{1cm}@{}}
        \cmidrule[1pt]{1-13} 
        

\textbf{True Detect}   & { }0.0\%  & { }0.0\%  & { }0.0\%  & { }0.0\%  & { }0.0\%  & { }0.0\%  & { }0.0\%  & { }0.0\%  & { }0.0\%  & { }0.0\%  & { }0.0\%  & 0.0\%  \\

\textbf{False Detect}   & { }0.0\%  & { }0.0\%  & { }0.0\%  & { }0.0\%  & { }0.0\%  & { }0.0\%  & { }0.0\%  & { }0.0\%  & { }0.0\%  & { }0.0\%  & { }0.0\%  & 0.0\%  \\ \cmidrule[0.5pt]{1-13}

\textbf{SEM Range}  & & \multicolumn{5}{l}{ \textbf{False Detect:}  0.0\% - 0.0\%}  & & \multicolumn{5}{l}{\textbf{True Detect:}  0.0\% - 1.2\%, } \\ \cmidrule[0.5pt]{1-13}

        \end{tabular}
        \label{tbl:plausible:detect:click:all}
        }  \\
      \subfloat[Click 2 Random Items] {
       \centering 
        \begin{tabular}{@{}L{2cm}C{1cm}C{1cm}C{1cm}C{1cm}C{1cm}C{1cm}C{1cm}C{1cm}C{1cm}C{1cm}C{1cm}|C{1cm}@{}}
        \cmidrule[1pt]{1-13} 
        

\textbf{True Detect}   & { }0.0\%  & { }0.0\%  & { }0.0\%  & { }0.0\%  & { }0.0\%  & { }0.0\%  & { }0.0\%  & { }0.0\%  & { }0.0\%  & { }0.0\%  & { }0.0\%  & 0.0\%  \\

\textbf{False Detect}   & { }0.0\%  & { }0.0\%  & { }0.0\%  & { }0.0\%  & { }0.0\%  & { }0.0\%  & { }0.0\%  & { }0.0\%  & { }0.0\%  & { }0.0\%  & { }0.0\%  & 0.0\%  \\ \cmidrule[0.5pt]{1-13}

\textbf{SEM Range}  & & \multicolumn{5}{l}{ \textbf{False Detect:}  0.0\% - 0.0\%}  & & \multicolumn{5}{l}{\textbf{True Detect:}  0.0\% - 0.2\%, } \\ \cmidrule[0.5pt]{1-13}
        \end{tabular}
        \label{tbl:plausible:detect:click:random:2}
        }                                     
\end{scriptsize}
\end{table*}%

Proxy topic query scripts where constructed by selecting a sensitive topic of interest from Table \ref{tbl:scripts}, and then selecting an uninteresting proxy topic from the list of $3$ proxy topics. Having decided on a sensitive query we wish to issue, we select at least three and no more than four queries related to the proxy topic from a prepared list of proxy topic queries. We next randomly shuffle the order of the selected sensitive and proxy topic queries. In this way there is always a subgroup of at least two proxy topic queries next to each other in each query session. Finally, for testing purposes, we place a probe query before and after each block of 3-4 proxy + 1 sensitive queries to measure changes in \PRIEPS score. We repeat this exercise using the same proxy topic until a typical query session consisting of $5$ probe queries is created. An example of a typical randomly generated sequence consisting of $5$ probes queries is:

\begin{mdframed}[leftmargin=10pt,rightmargin=10pt]
\BLUE{probe}, \GREEN{proxy}, \RED{sensitive}, \GREEN{proxy}, \GREEN{proxy}, \BLUE{probe},  \RED{sensitive}, \GREEN{proxy}, \GREEN{proxy}, \GREEN{proxy}, \BLUE{probe} \GREEN{proxy}, \GREEN{proxy}, \GREEN{proxy}, \RED{sensitive}, \BLUE{probe}, \GREEN{proxy}, \GREEN{proxy}, \RED{sensitive}, \GREEN{proxy}, \BLUE{probe}
\end{mdframed}

Data was collected for $2,300$ such proxy topic sessions. This included each  of the sensitive topics in Table \ref{tbl:scripts} and each of the click models described in previous sections. The same \PRIEPS setup as before was used, including the same training set, to detect evidence of learning by the search engine. 

Measured detection rates are shown in Table \ref{tbl:plausible:detect:clicks}.  Remarkably, the True Positive detection rate for all topics and for all click-models tested is $0\%$.  That is, we find \emph{no evidence of learning} by the search engine in any of the tests.   Since our detection approach is demonstrated to be notably sensitive to search engine learning in earlier sections, we can reasonably infer that this result is not due to a defect in the detection methodology but rather genuinely reflects successful misdirection of the search engine away from sensitive topics.

\begin{table*}[h]
\centering{
\captionsetup{position=bottom}
\caption{Lower bounds on plausible deniability ($\delta(c)$) estimated from maximum and minimum measured values of \PRIEPS for proxy topic and various click models\label{tbl:proxy:clicks}}
\setlength{\tabcolsep}{0.02cm}
\begin{scriptsize} 
      \subfloat[Proxy Topic, No Click] {
        \centering
\begin{tabular}{@{}L{2cm}C{1cm}C{1cm}C{1cm}C{1cm}C{1cm}C{1cm}C{1cm}C{1cm}C{1cm}C{1cm}C{1cm}|C{1cm}@{}}
\cmidrule[1pt]{1-13}

\textbf{Reference Topic(c)} & \rot{\textbf{anorexia}} & \rot{\textbf{bankrupt}} & \rot{\textbf{diabetes}} & \rot{\textbf{disabled}} & \rot{\textbf{divorce}} & \rot{\textbf{gambling}} & \rot{\textbf{gay}} & \rot{\textbf{location}} & \rot{\textbf{payday}} & \rot{\textbf{prostate}} & \rot{\textbf{unemployed}} & \rot{\textbf{Average}}  \\ \cmidrule[1pt]{1-13} \\

\cmidrule[1pt]{1-13}

\textbf{$\delta^{*}$}   & 0.50  & 0.50    & 0.50    & 0.50  & 0.50  & 0.50  & 0.50  & 0.50  & 0.50 & 50\%  & 0.50  & 0.50  \\ \cmidrule[0.5pt]{1-13}

\textbf{SEM Range}  & &   & & \multicolumn{5}{l}{\textbf{$\delta^{*}$: }  0.0 - 0.0 } \\ \cmidrule[0.5pt]{1-13}

        \end{tabular}
        \label{tbl:plausible:detect:proxy:no:click}
        } \\

      \subfloat[Proxy Topic, Click Relevant] {
        \centering
\begin{tabular}{@{}L{2cm}C{1cm}C{1cm}C{1cm}C{1cm}C{1cm}C{1cm}C{1cm}C{1cm}C{1cm}C{1cm}C{1cm}|C{1cm}@{}}
\cmidrule[1pt]{1-13}

\textbf{$\delta^{*}$}   & 0.50  & 0.50    & 0.50    & 0.50  & 0.50  & 0.50  & 0.50  & 0.50  & 0.50 & 50\%  & 0.50  & 0.50  \\ \cmidrule[0.5pt]{1-13}

\textbf{SEM Range}  & &   & & \multicolumn{5}{l}{\textbf{$\delta^{*}$: }  0.0 - 0.0 } \\ \cmidrule[0.5pt]{1-13}

        \end{tabular}
        \label{tbl:plausible:detect:proxy:relevant}
        } \\

     \subfloat[Proxy Topic, Click Non-Relevant] {
       \centering 
\begin{tabular}{@{}L{2cm}C{1cm}C{1cm}C{1cm}C{1cm}C{1cm}C{1cm}C{1cm}C{1cm}C{1cm}C{1cm}C{1cm}|C{1cm}@{}}
\cmidrule[1pt]{1-13}

\textbf{$\delta^{*}$}   & 0.50  & 0.50    & 0.50    & 0.50  & 0.50  & 0.50  & 0.50  & 0.50  & 0.50 & 50\%  & 0.50  & 0.50  \\ \cmidrule[0.5pt]{1-13}

\textbf{SEM Range}  & & & & \multicolumn{5}{l}{\textbf{$\delta^{*}$: }  0.0 - 0.0 } \\ \cmidrule[0.5pt]{1-13}

        \end{tabular}
        \label{tbl:plausible:detect:proxy:non:relevant}
        } \\
        
     \subfloat[Proxy Topic, Click All] {
       \centering 
\begin{tabular}{@{}L{2cm}C{1cm}C{1cm}C{1cm}C{1cm}C{1cm}C{1cm}C{1cm}C{1cm}C{1cm}C{1cm}C{1cm}|C{1cm}@{}}
\cmidrule[1pt]{1-13}

\textbf{$\delta^{*}$}   & 0.50  & 0.50    & 0.50    & 0.50  & 0.50  & 0.50  & 0.50  & 0.50  & 0.50 & 50\%  & 0.50  & 0.50  \\ \cmidrule[0.5pt]{1-13}

\textbf{SEM Range}  & & & & \multicolumn{5}{l}{\textbf{$\delta^{*}$: }  0.0 - 0.0 } \\ \cmidrule[0.5pt]{1-13}

        \end{tabular}
        \label{tbl:plausible:detect:proxy:all}
        } \\
        
      \subfloat[Proxy Topic, Click 2 Random Items] {
       \centering 
\begin{tabular}{@{}L{2cm}C{1cm}C{1cm}C{1cm}C{1cm}C{1cm}C{1cm}C{1cm}C{1cm}C{1cm}C{1cm}C{1cm}|C{1cm}@{}}
\cmidrule[1pt]{1-13}
        

\textbf{$\delta^{*}$}   & 0.50  & 0.50    & 0.50    & 0.50  & 0.50  & 0.50  & 0.50  & 0.50  & 0.50 & 50\%  & 0.50  & 0.50  \\ \cmidrule[0.5pt]{1-13}

\textbf{SEM Range}  & & & & \multicolumn{5}{l}{\textbf{$\delta^{*}$: }  0.0 - 0.0 } \\ \cmidrule[0.5pt]{1-13}
        \end{tabular}
        \label{tbl:plausible:detect:proxy:random:2}
        }                              
\end{scriptsize}
}
\end{table*}%

Measurements of plausible deniability are reported in Table~\ref{tbl:proxy:clicks} for the proxy topic model and for each of the click models discussed earlier. The results indicate that $\delta^{*}=0$ for all sensitive topics and combinations of click models tested, as might be expected from the fact that the True Positive detection rate is zero. 

This result is encouraging, especially in light of the negative results in previous sections for other obfuscation approaches.  It suggests use of sequences of queries on uninteresting proxy topics may provide a defence of plausible deniability. The trade-offs for the user include the overhead of maintaining proxy topics and associated queries and the additional resources required to issue proxy topic queries in a consistent way. However since both of these tasks were readily automated during our testing it seems reasonable that these trade-offs could be readily managed by software in a way that is essentially transparent to the user.

\section{Conclusions and Discussion}
The \PRIEPS framework was used to assess threats to personal online privacy from search engine profiling and to investigate the effectiveness of a number of privacy defence tactics in protecting a user from such profiling. A comprehensive measurement program using online search engines indicated that sensitive topic learning is readily detectable with high confidence for a range of topics generally regarded as sensitive. Using a plausible deniability model of threat assessment we show that topic learning results in measurable impacts on the ability of a user to deny their interest in all sensitive topics tested. 

Our experiments indicate that revealing queries provide a significant signal for search engine adaptation. While user clicks provide additional feedback, we do not observe the same degree of associated learning with click behaviour as is observed with revealing queries. Our experiments show that injecting coherent query noise associated with ``proxy topics'' that are uninteresting to the user while capable of generating commercial content provides observable privacy protection in the case of a plausible deniability threat model.  

The approach to detection of topic profiling used in this paper does not distinguish between items based on rank or order on the page. As discussed in Section~\ref{sec:related:work}, user click patterns may be used by recommender systems to rank page content, placing content likely to attract user clicks in more prominent positions on pages. In our experiments, we observed changes in volume of advert content on samples of probe query response pages. There are several plausible avenues of investigation that may help explain the mechanism behind this, such as user click patterns and the semantics of the true and noise queries chosen. As discussed in Section~\ref{sec:related:work}, how the semantics of queries and the interaction between user click-models and content ranking may impact user privacy is beyond the scope of this current paper.

Our results suggest existing defences based on simple query injection are not adequate to prevent sensitive topic learning. We observe that noise query injection as a privacy defence strategy is most effective when queries are associated with ``proxy topics'' rather than random noise. Our results also suggest that defences based on click obfuscation approaches are also unlikely to be successful. 

These results reflect perhaps the sophistication of the learning already employed by search engines. To be robust, however, privacy defence strategies should recognise that modern recommender systems learn new topics fast, exploit multiple signals, and sustain learning  over the lifetime of query sessions. Our observation that proxy topics provide some relief indicates that defence is not impossible, but suggests that increasingly sophisticated approaches are required in the face of ever improving search engine capability. In choosing proxy topics, for example, a user must be careful to not stimulate unintended learning of the proxy topics which may influence the utility of future search results.

Overall our results point towards an ``arms race'' where search engine capability is continuously evolving. In this setting, even if injection of proxy topic sessions were to become widely deployed then we can reasonably expect search engines to respond with more sophisticated learning strategies.  Our results also point towards the fact that the text in search queries plays a key role in search engine learning. While perhaps obvious, this observation reinforces the user's need to be circumspect about the queries that they ask if they want to avoid search engine learning of their interests.


\bibliographystyle{unsrt}
\begin{small}
\bibliography{noisy}
\end{small}
\newpage
\appendix
\section{\PRIEPS Results - ``Click Relevant'' All Noise Levels}
\label{sec:backm:noise:relevant}
\begin{table*}[h]
\centering
\captionsetup{position=bottom}
\caption{Comparison of measured detection rate for various noise models (all with ``relevant  clicks'').\label{tbl:distinguish:noise:2}}
\setlength{\tabcolsep}{0.1cm}
\begin{scriptsize}
    \subfloat[No Noise] {
        \centering
\begin{tabular}{@{}L{2cm}C{1cm}C{1cm}C{1cm}C{1cm}C{1cm}C{1cm}C{1cm}C{1cm}C{1cm}C{1cm}C{1cm}|C{1cm}@{}}
\cmidrule[1pt]{1-13}

\textbf{Reference Topic} & \rot{\textbf{anorexia}} & \rot{\textbf{bankrupt}} & \rot{\textbf{diabetes}} & \rot{\textbf{disabled}} & \rot{\textbf{divorce}} & \rot{\textbf{gambling}} & \rot{\textbf{gay}} & \rot{\textbf{location}} & \rot{\textbf{payday}} & \rot{\textbf{prostate}} & \rot{\textbf{unemployed}} & \rot{\textbf{Average}}  \\ \cmidrule[1pt]{1-13} \\

\cmidrule[1pt]{1-13}
\textbf{True Detect}    & 100.0\%  & { }98.4\%  & 100.0\%  & { }94.9\%  & 100.0\%  & { }96.6\%  & 100.0\%  & 100.0\%  & 100.0\%  & 100.0\%  & 100.0\%  & 99.1\%  \\

\textbf{False Detect}   & { }0.0\%  & { }0.0\%  & { }0.0\%  & { }0.0\%  & { }0.0\%  & { }0.0\%  & { }0.0\%  & { }0.0\%  & { }0.0\%  & { }0.0\%  & { }0.0\%  & 0.0\%  \\ \cmidrule[0.5pt]{1-13}

\textbf{SEM Range}  & & \multicolumn{5}{l}{ \textbf{False Detect:}  0.0\% - 0.0\%}  & & \multicolumn{5}{l}{\textbf{True Detect:}  0.0\% - 0.2\%, } \\ \cmidrule[0.5pt]{1-13}
        \end{tabular}        
        \label{tbl:no:noise} 
        }\\
    \subfloat[Low Noise] {
        \centering
        \begin{tabular}{@{}L{2cm}C{1cm}C{1cm}C{1cm}C{1cm}C{1cm}C{1cm}C{1cm}C{1cm}C{1cm}C{1cm}C{1cm}|C{1cm}@{}}
        \cmidrule[1pt]{1-13}

\textbf{True Detect}    & 100.0\%  & 100.0\%  & 100.0\%  & { }97.4\%  & 100.0\%  & { }96.8\%  & 100.0\%  & 100.0\%  & { }91.7\%  & 100.0\%  & 100.0\%  & 98.7\%  \\

\textbf{False Detect}   & { }0.0\%  & { }0.0\%  & { }0.0\%  & { }0.0\%  & { }0.0\%  & { }0.0\%  & { }0.0\%  & { }0.0\%  & { }0.0\%  & { }0.0\%  & { }0.0\%  & 0.0\%  \\ \cmidrule[0.5pt]{1-13}

\textbf{SEM Range}  & & \multicolumn{5}{l}{ \textbf{False Detect:}  0.0\% - 0.0\%}  & & \multicolumn{5}{l}{\textbf{True Detect:}  0.0\% - 0.5\%, } \\ \cmidrule[0.5pt]{1-13}
       
        \end{tabular}        
        \label{tbl:low:noise} 
        }\\        
     \subfloat[Medium Noise] {
       \centering 
        \begin{tabular}{@{}L{2cm}C{1cm}C{1cm}C{1cm}C{1cm}C{1cm}C{1cm}C{1cm}C{1cm}C{1cm}C{1cm}C{1cm}|C{1cm}@{}}
        \cmidrule[1pt]{1-13} 
        

\textbf{True Detect}    & 100.0\%  & { }99.3\%  & 100.0\%  & { }96.8\%  & 100.0\%  & { }93.3\%  & 100.0\%  & 100.0\%  & { }98.1\%  & 100.0\%  & 100.0\%  & 98.9\%  \\

\textbf{False Detect}   & { }0.0\%  & { }0.0\%  & { }0.0\%  & { }0.0\%  & { }0.0\%  & { }0.0\%  & { }0.0\%  & { }0.0\%  & { }0.0\%  & { }0.0\%  & { }0.0\%  & 0.0\%  \\ \cmidrule[0.5pt]{1-13}

\textbf{SEM Range}  & & \multicolumn{5}{l}{ \textbf{False Detect:}  0.0\% - 0.0\%}  & & \multicolumn{5}{l}{\textbf{True Detect:}  0.0\% - 0.2\%, } \\ \cmidrule[0.5pt]{1-13}

        \end{tabular}
        \label{tbl:med:noise}
        } \\
      \subfloat[High Noise] {
       \centering 
        \begin{tabular}{@{}L{2cm}C{1cm}C{1cm}C{1cm}C{1cm}C{1cm}C{1cm}C{1cm}C{1cm}C{1cm}C{1cm}C{1cm}|C{1cm}@{}}
        \cmidrule[1pt]{1-13} 
        

\textbf{True Detect}    & 100.0\%  & 100.0\%  & 100.0\%  & { }97.5\%  & { }98.1\%  & { }96.7\%  & 100.0\%  & 100.0\%  & 100.0\%  & 100.0\%  & 100.0\%  & 99.3\%  \\

\textbf{False Detect}   & { }0.0\%  & { }0.0\%  & { }0.0\%  & { }0.0\%  & { }0.0\%  & { }0.0\%  & { }0.0\%  & { }0.0\%  & { }0.0\%  & { }0.0\%  & { }0.0\%  & 0.0\%  \\ \cmidrule[0.5pt]{1-13}

\textbf{SEM Range}  & & \multicolumn{5}{l}{ \textbf{False Detect:}  0.0\% - 0.0\%}  & & \multicolumn{5}{l}{\textbf{True Detect:}  0.0\% - 0.1\%, } \\ \cmidrule[0.5pt]{1-13}

        \end{tabular}
        \label{tbl:high:noise}
        }                              
\end{scriptsize}
\end{table*}
\newpage
\section{Proxy Topics and Queries}
\label{sec:backm:proxy:topics:queries}
The layout follows a Python configuration file format. 

Each proxy topic labels a section, for example, \emph{[tickets]}.

Keywords are included to check for adverts relevant to the proxy topic and are listed after a \emph{keywords} tag. 

Queries are listed 1 per line, after a \emph{queries} tag.
\begin{scriptsize}
\begin{multicols}{2}
\begin{verbatim}
[tickets]
keywords=concerts croke park tickets cheap deal
queries=concerts in croke park
tickets for croke park
buy tickets croke park
matches in croke park
garth brookes concert dublin
when is garth brookes coming to dublin?
electric picnic in 2016?
concerts in ireland 2016
marley park concerts?
bruce springsteen in dublin
taylor swift concert tickets?

[vacation]
keywords=holidays vacation paris 
hotels flights cheap deal
queries=travel
		flights to paris
		prices for disney paris tickets
		paris tours
		tickets for paris trains
		holiday homes in france
		what is a gite?
		holiday parks caravan france
		car hire in france
		do I need travel insurance for france?

[car]
keywords=new car deal finance garage motor 
free tax trade-in
queries=new care prices
		motor loans
		toyota dealer near here
		check resale value of my car
		trade in my old car
		auto trader used cars
		how often is nct?
		how much is my car worth for trade in?
		nissan dealer near me?
		best prices for seat cars?
\end{verbatim}
\end{multicols}
\end{scriptsize}

\end{document}